%%	Title 	: Offline and Online Incentive Mechanism Design for Smart-phone Crowd-sourcing
%%	Authors : Ashwin Subramanian and G. Sai Kanth
%%	Date 	: 7th October 2013
%%	

\documentclass[conference]{IEEEtran}
\makeatletter
\def\ps@headings{
\def\@oddhead{\mbox{}\scriptsize\rightmark \hfil \thepage}
\def\@evenhead{\scriptsize\thepage \hfil \leftmark\mbox{}}
\def\@oddfoot{}
\def\@evenfoot{}}
\makeatother
\pagestyle{headings}

\usepackage{textcomp}
\usepackage{amsmath}
\usepackage{amssymb}
\usepackage{float}
\usepackage{psfrag}
\usepackage[dvips]{graphicx,color}
\usepackage{cite}
\usepackage{epstopdf}

\newtheorem{theorem}{Theorem}[section]

\newtheorem{remark}[theorem]{Remark}

\newenvironment{proof}[1][Proof]{\begin{trivlist}
\item[\hskip \labelsep {\bfseries #1}]}{\end{trivlist}}

%Document Starts Here
\begin{document}

%Theorem definitions made here
\newtheorem{lem}{Lemma}
\newtheorem{thm}{Theorem}
\newtheorem{prb}{Problem}

%Title Defined Here
\title{Offline and Online Incentive Mechanism Design for Smart-phone Crowd-sourcing}
\author{ Ashwin Subramanian \quad G Sai Kanth \quad Rahul Vaze }
\maketitle

%The abstract starts here
\begin{abstract}

In this paper, we consider the problem of incentive mechanism design for smart-phone crowd-sourcing.
Each user participating in crowd-sourcing submits a set of tasks it can accomplish and its corresponding bid. 
The platform then selects the users and their payments to maximize its utility while ensuring truthfulness, individual rationality, profitability, and polynomial algorithm complexity. 
Both the offline and the online scenarios are considered, where in the offline case, all users submit their profiles simultaneously, while in the online case they do it sequentially, and the decision whether to accept or reject each user is done instantaneously with no revocation.
The proposed algorithms for both the offline and the online case are shown to satisfy all the four desired properties of an {\it efficient} auction. 
Through extensive simulation, the performance of the offline and the online algorithm is also compared.

\end{abstract}
%The abstract ends here

%This section provides an introduction to the concept of smart-phone crowd-sourcing
\section{Introduction}
Crowd-sourcing using smart phones is a new idea that has gained widespread interest 
\cite{Nericell, Sensorly, PotholeCrSourcing2008, EarphoneCrSourcing2010, CrowdSourcingPractical2009, Sheng2012CrSourcingEnergy, WalrandCrSourcing2012, M-Sensing2012}. 
Smart phones these days are equipped with multiple sensors that can be used to monitor key features of the surrounding environment that help in improving the user experience or simplifying human effort. 
Collectively using data derived from multiple smart phones (called {\it crowd-sourcing}) helps in improving the social welfare, e.g. 
helps public utility companies to track potholes locations, electricity failure, emergency relief operations, traffic congestion etc.

Several commercial applications using smart phone crowd-sourcing are already in place, such as Sensorly \cite{Sensorly}, Nericell \cite{Nericell}, 
Google voice recognition and Apple's Siri also use data from users spread across many different locations to improve their services. 
In some applications, users volunteer to share their data since that also helps in improving their own {\it utility}. 
This is, however, not true in general, and necessitates an incentive mechanism design, where users are externally incentivized in the form of payments for the data/tasks they are willing to share/perform.

In this paper, we consider the incentive mechanism design problem for smart phone crowd-sourcing, where we model it as a {\it reverse auction}.
The platform announces a set of tasks that it wants to accomplish, and each user submits the list of tasks it is ready to provide and its corresponding bid. 
The platform has a utility function associated with its set of tasks, and the problem is to find the set of users and their corresponding payments that maximizes its utility. 
One of the main challenges in designing such incentive mechanism design is to ensure {\it truthfulness}, i.e. no user should not have any incentive to bid more than its true valuation. 
Since users expend some resources to accomplish tasks, it is natural to assume that they will seek to maximize the profit they intend to make from the platform.

One of the ways to ensure truthful auction is the VCG mechanism \cite{VCG1961}. 
In a forward auction, the VCG mechanism charges each individual the harm it causes to other bidders in terms of the social welfare utility, 
while in a reverse auction, it pays each user an amount equal to the value contributed by the user to the auction \cite{VCG1961}.
However, finding the winning set of users in a VCG mechanism is combinatorial and has exponential complexity. 
There also some technical difficulties with the VCG mechanism \cite{VCGIssues}. 
So for a computationally feasible operation such as smart-phone crowd-sourcing, one cannot directly use the VCG mechanism. 
Several variations of VCG mechanism can be found in \cite{OnlineVCG, AuctionSurvey}.

There are two basic paradigms for smart-phone crowd-sourcing, {\it offline} and {\it online}. 
In offline case, all users are present/active simultaneously, and send their profiles to the platform at the same time. 
In the online case, users arrive sequentially one at a time and submit their profiles, and the platform must decide immediately whether to accept or reject the user and how much to pay the user. 
A decision once made, is irrevocable. 
For example, the offline scenario is applicable for current traffic congestion monitoring, while the online case is more suited for potholes tracking type of applications that are localized, 
where users pass over potholes in a given area sequentially.
The online scenario is more general than the offline case, since all potentially participating users may not be active at the same time.

For both the offline and online scenarios, the platform's objective is to select the set of users and their payments to maximize its utility, 
subject to the following four requirements \cite{NarahariMechanismDesignTutorialI2008, NarahariMechanismDesignTutorialII2008}: {\it computational efficiency} - the algorithm implemented by the platform has polynomial run time complexity, 
{\it individual rationality} - the selfish utilities of all users involved are non-negative, {\it profitability}- the platform utility is non-negative after the auction concludes,
and {\it truthfulness} - no user has any incentive to bid different from its true valuation.

In prior work, an incentive design mechanism called {\it M-Sensing} satisfying all the four properties for the offline case has been derived in \cite{M-Sensing2012}. 
M-sensing uses a greedy algorithm that at each step adds the user with maximum incremental utility.
It pays each user in the selected set, the maximum value which that user can bid and still be selected at some possible position in the greedy selection phase. More recently in \cite{Luo2013CrowdSourcing}, both offline and online algorithms for sensing time schedules have been proposed for the crowd-sourcing problem. The proposed algorithms are shown to be truthful, and more importantly, analytical performance guarantees have been found on the performance of both offline and online algorithms in \cite{Luo2013CrowdSourcing}. However,  \cite{Luo2013CrowdSourcing} only considers a linear utility function.

In this paper, we first propose an algorithm for the offline case motivated from the VCG mechanism design and the greedy approach of the M-sensing algorithm.
Our algorithm called SMART, first greedily finds a screening set that is identical to the M-sensing winner set. Thereafter, inspired by the VCG mechanism, the screened set is refined further to keep only those users that have positive marginal utility, and the payment to each user is equal to the marginal utility increase compared to the next best user plus the bid of the next best user.
We show that SMART satisfies all the four properties required for efficient mechanism design. 
In addition, we show that the platform utility of SMART is always greater than or equal to that of M-sensing, while with incurring identical complexity.

Another important advantage of SMART algorithm is that it can be adapted easily for the online case, while still satisfying the four properties. 
For the online case, we take motivation from the online $k$-secretary problem \cite{LleinbergK-Secretary2005, babaioff2007k-Secretary}. 
In the online $k$-secretary problem, $N$ secretaries with arbitrary ranks arrive in a uniformly random order and the problem is to select the $k$ best ranked secretaries in an online manner. 
The best known algorithms for solving the online $k$-secretary problem reject the first $m = N/e$ secretaries, and generating a threshold set from the first $m$ secretaries, 
which is then used to select the $k$ best secretaries among the remaining $N-m$ secretaries. 
For the online smart-phone crowd-sourcing problem, we reject the first few users and run the offline SMART algorithm on their profiles. The output of the offline SMART algorithm is then used to select the users from the remaining users and to decide their payments, such that the four properties are satisfied. 

Typically, the online and offline solutions are compared through the competitive ratio \cite{BorodinOnlineBook}, 
which is defined to be the ratio of the utility obtained by the online  algorithm to that obtained by the offline algorithm. 
Ideally, we would like the competitive ratio to be as close to unity as possible. 
For the $k$-secretary problem, the competitive ratio has been shown to be $1-1/e$ if $m=\frac{N}{e}$. 
In the crowd-sourcing problem, as shown in \cite{LleinbergK-Secretary2005, babaioff2007k-Secretary}, if you pay each selected user its own bid, the mechanism is not truthful. 
This key point complicates the computation of the competitive ratio for the crowd-sourcing problem.
Users selected in the offline and online case are actually paid differently to ensure truthfulness. 
So there is no easy analytical way to compare the utility of the offline and the online algorithms, unlike the $k$-secretary problem. 
We thus resort to extensive simulation to find the competitive ratio that depends of $m$, the number of users rejected and used for running the offline SMART algorithm, and find that most often $m = n/3$, where $n$ is the total number of users. 

%Describes and Formulates the problem
\section{System Model}

The problem of smart-phone crowd-sourcing is modeled as a reverse auction.
The platform declares a set of tasks to be performed, of which each task has some value to the platform.
The users submit their profiles to the platform, where each profile contains a list of tasks they will complete and their corresponding bid.
The users expect to be paid an amount at least equal to their bid if they perform their tasks.
In the offline case, the platform receives the profiles of all users simultaneously.
The platform must decide which users to select and how much to pay each selected user.
The offline problem can therefore be described as follows.
Given a set $U$ of users, select a set $T\subseteq U$, so that the platform utility\footnote{The term platform utility is defined formally later. It relates to the profit the platform obtains from a set of users}
 of $T$ is maximized over all possible subsets of $U$.
In the online case, users arrive one at a time and submit their profiles to the platform.
The platform must decide immediately whether to accept or reject the user and how much to pay the user. 
A decision once made, is irrevocable.
Similar to the offline case, the objective in this case too is to maximize the platform utility of a set $T\subseteq U$, however, with causal user profiles. 

Let the platform declare the set of tasks $\Gamma = \{t_1,t_2,t_3,\ldots ,t_m\}$, where the value of a task $t_k$, is given by $\chi(t_k)$, which for the sake of brevity will be written as $\chi_k$.
Further, the function $\chi$ is extended for any set of tasks $\tau \subseteq \Gamma$, $\chi(\tau)$, where it could be linear $\chi(\tau) = \sum_{i: t_i \in \tau} \chi_i$ or any other arbitrary combinatorial function.

%This table lists Commonly Used Notations
\begin{table}
\caption{Notation Used}
\begin{tabular}{c l}
\hline
Notation & Description\\
\hline
$U$ & Set of all users\\
$S$ & Set of screened users\\
$T$ & Set of winning users\\
$i,j$ & User $i$ and User $j$ in set $U$\\
$n$ & Total number of users\\
$\Gamma$ & The platform's set of tasks\\
$t_k$ & Task $k$ in set $\Gamma$\\
$m$ & Total number of sensing tasks\\
$\chi_k$ & Value of task $k$\\
$\chi(\tau)$ & The value of tasks in $\tau$\\
$v_i$ & Value of all tasks performed by user $i$\\
$v(S)$ & Value of tasks performed by users in $S$\\
$v_i(S)$ & Marginal value of user $i$, given set $S$\\
$b_i$ & Bid of user $i$\\
$p_i$ & Payment made to user $i$\\
$c_i$ & Cost incurred by user $i$\\
$\tau_i$ & Set of tasks performed by user $i$\\
$\tau(S)$ & Set of tasks performed by users in $S$\\
$\tau_i(S)$ & Marginal tasks done by user $i$, given set $S$\\
$\omega_i$ & Personal or intrinsic utility of user $i$\\
$u_i$ & Platform utility of user $i$\\
$u(S)$ & Platform utility of users in $S$\\
$u_i(S)$ & Marginal utility of user $i$, given set $S$\\
$\gamma_i$ & Minimum replacement user's bid\\
$\sigma_i(S)$ & Difference of marginal value and bid\\
\hline
\end{tabular}
\end{table}

Let the set of users be $U = \{1,2,3,\ldots ,n\}$.
Each user $i \in U$ can perform a set of tasks $\tau_i \subseteq \Gamma$ for a bid $b_i$.
The value of a user $i$ to the platform is, $v_i = \chi(\tau_i)$.
 
For a set of users $S\subseteq U$, the set of tasks performed by them are denoted by $\tau(S)$, with
\begin{equation}\label{tauS_in_terms_of_users_in_S}
\tau(S) = \bigcup_{i\in S} \tau_i,
\end{equation}
and the value of $S\subseteq U$ is,
\begin{equation}\label{vS_in_terms_of_tauS}
v(S) = \chi(\tau(S)).
\end{equation}
Note that the  function $v(S)$ need not be linear.

Let the marginal value of a user $i$ with respect to a set $S\subseteq U$ be denoted by $v_i(S)$, defined as,
\begin{equation}\label{marginalvalue_expression}
v_i(S) = v(S\cup\{i\}) - v(S).
\end{equation}  
The marginal value of a user $i$ with respect to a set $S$ is also the value of task set $\tau_i\big\backslash\big(\tau_i\cap\tau(S)\big)$.

The platform utility of a user $i$ is denoted by $u_i=v_i-p_i$, and is defined as the difference between the value of user $i$ and the payment made to user $i$, which measures the profit the platform obtains by choosing user $i$.

Generalizing, the platform utility of a set $S$ of users is denoted by $u(S)$ and is, 
\begin{equation}\label{utility_of_set_S_to_platform}
u(S) = v(S) - \sum_{i\in S} p_i.
\end{equation}

The marginal utility, $u_i(S)$, of a user $i$, given a set $S$ of users is the difference between the utility of the set $S \cup \{i\}$ and the utility of set $S$,  
\begin{equation}\label{marginalutility_expression}
u_i(S) = u(S\cup \{i\}) - u(S).
\end{equation}
The marginal utility is also equal to the difference between the marginal value of user $i$ and the payment made to user $i$,
\begin{equation}\label{marginalutility_user}
u_i(S) = v_i(S) - p_i.
\end{equation}

The difference between the marginal value (with respect to $S\subseteq U$) and bid of a user $i$ is denoted by $\sigma_i(S)$,
\begin{equation}
\sigma_i(S) = v_i(S\backslash\{i\}) - b_i.
\end{equation}

The personal or selfish utility of a user $i$ is denoted by $\omega_i$, and measures the personal profit that user $i$ has gained through the reverse auction.
If $c_i$ is the cost incurred by user $i$ in completing tasks for the platform and $p_i$ is the payment made to user $i$ by the platform, the personal utility of user $i$ is
\begin{equation}\label{personal_utility}
\omega_i = p_i - c_i,
\end{equation}
if user $i$ is selected by the platform and zero otherwise. 

The problem of smart-phone crowd-sourcing is now formally defined.
\begin{prb}
Given a set $U$ of users,
\begin{equation}
\max_{T\subseteq U} u(T),
\end{equation} subject to the four properties of an efficient auction, namely, computational efficiency, individual rationality, profitability and truthfulness.
\end{prb}

The four properties of an efficient auction are described below.
\begin{itemize}
\item \emph{Computational efficiency - } An auction is computationally efficient if the algorithm used to find winning users has polynomial run time complexity.
\item \emph{Individual rationality - } An auction is individual rational if the selfish utilities of all users involved are non-negative.
\item \emph{Profitability - } An auction is profitable if the platform utility is non-negative after the auction concludes. 
\item \emph{Truthfulness - } An auction is called truthful, if and only if a user involved in the auction has no incentive to bid different from its true value.
\end{itemize}  

The first three properties ensure that the proposed algorithms are feasible.
Truthfulness makes the reverse auction free from market manipulation. 
It establishes that there is no incentive for users to manipulate their bids in the hope of higher individual profits. 

%This section discusses the improvements made to the current offline solution to the smart-phone crowd-sourcing problem
\section{Offline Case}

\subsection{Design of the Mechanism}
To solve the offline smart-phone crowd-sourcing problem, we propose an algorithm called SMART, that stands for Search for Marginal Appropriate Replacement Tasks users.
The algorithm consists of three phases, the screening phase followed by the winner selection phase and finally the bad user removal phase. 
In the winner selection phase, the payment made to each winning user is also determined.

%The SMART Algorithm and associated functions
\begin{table}
\begin{tabular}{r l}
\hline
& \textbf{SMART Algorithm}\\
\hline
1 	&	\textbf{// Phase 1 - User Screening}\\
2 	&	$S \leftarrow \emptyset$, $P \leftarrow \{p_1,p_2,p_3,\ldots, p_n\}$\\
3	&	$i \leftarrow \arg \max_{j\in U}(v_j(S)-b_j)$\\
4 	& 	\textbf{while} $v_i(S) > b_i$ and $S \neq U$ do\\
5 	& 	\quad $S \leftarrow S \cup \{i\}$\\
6 	& 	\quad $i \leftarrow \arg \max_{j\in U}(v_j(S)-b_j)$\\
7 	& 	\textbf{endwhile}\\
8 	& 	\textbf{for} each $i \in U$ do\\ 
9	&	\quad $p_i \leftarrow 0$\\
10	&	\textbf{endfor}\\
11	& 	\textbf{// Phase 2 - Winner Selection}\\
12	&	$T \leftarrow S$\\
13	& 	\textbf{for} $i = 1,2,\ldots,|T|$ do\\
14	&	\quad $(j,\gamma_i,\beta_j) \leftarrow $ Next Best User $(i,U,T)$\\
15	&	\quad $\sigma_i(T) \leftarrow v_i(T\backslash\{i\}) - b_i$\\
16	&	\quad $\beta_i \leftarrow \text{User Entry Payment}(U,i)$\\
17	&	\quad \textbf{// Cond 1 - Positive marginal utility}\\
18	&	\quad \textbf{if} $\sigma_i(T) > 0$ then\\
19	&	\quad \quad \textbf{// Cond 1.1 - Pay next best user's bid}\\
20	&	\quad \quad \textbf{if} $\gamma_i - b_i \geq 0$ and $\gamma_i \leq \beta_i$ and $\gamma_i \neq \infty$\\
21	&	\quad \quad \quad $p_i \leftarrow \gamma_i$\\
22	&	\quad \quad \textbf{// Cond 1.2 - Replace with next best user}\\
23	&	\quad \quad \textbf{else if} $\gamma_i < b_i$ and $\gamma_i \leq \beta_i$ and $\gamma_i \neq \infty$\\
24	&	\quad \quad \quad $(T,p_j) \leftarrow$ Replace User $(U,T,i,j)$\\
25	&	\quad \quad \textbf{// Cond 1.3 - Pay marginal value in S}\\
26	&	\quad \quad \textbf{else if} $\gamma_i > \beta_i$ or $\gamma_i = \infty$\\
27	&	\quad \quad \quad $p_i \leftarrow \min(\sigma_i(T) + b_i,\beta_i)$\\
28	&	\quad \quad \textbf{endif}\\
29	&	\quad \textbf{// Cond 2 - Non-Positive Marginal Value}\\
30 	&	\quad \textbf{else if} $\sigma_i(T) \leq 0$ and $\gamma_i \neq \infty$\\
31	&	\quad \quad $(T,p_j) \leftarrow$ Replace User $(U,T,i,j)$\\
32	&	\quad \textbf{else if} $\sigma_i(T) \leq 0$ and $\gamma_i = \infty$\\
33	&	\quad \quad $T \leftarrow T \backslash \{i\}$\\
34	&	\quad \textbf{endif}\\
35	&	\textbf{endfor}\\
36	&	\textbf{// Phase 3 - Bad User Removal}\\
37	&	\textbf{for} each $i$ in $T$ do\\
38	&	\quad \textbf{if} $u_i(T\backslash\{i\}) \leq 0$ then\\
39	&	\quad \quad $T \leftarrow T \backslash \{i\}$\\
40	&	\quad \quad $p_i \leftarrow 0$\\
41	&	\quad \textbf{endif}\\
42	&	\textbf{endfor}\\
43 	&	Return $(T,P)$\\
\hline
\end{tabular}
\end{table}

\begin{table}
\begin{tabular}{r l}
\hline
& \textbf{Next Best User $(i,U,T)$}\\
\hline
1	&	$j \leftarrow \arg\max_{k\in U\backslash T} v_k(T\backslash \{i\}) - b_k$, $\beta_j \leftarrow \infty$\\
2	&	\textbf{if} $v_j(T\backslash \{i\}) - b_j > 0$ then\\
3	&	\quad $\gamma_i \leftarrow v_i(T\backslash\{i\}) - v_j(T\backslash\{i\}) + b_j$\\
4	&	\textbf{else}\\
5	&	\quad $j \leftarrow -1$, $\gamma_i \leftarrow \infty$\\
6	&	\textbf{endif}\\
7	&	Return $(j,\gamma_i,\beta_i)$\\
\hline
\end{tabular} 
\end{table}

\begin{table}
\begin{tabular}{r l}
\hline
& \textbf{Replace User $(U,T,i,j)$}\\
\hline
1	&	$T \leftarrow (T \backslash \{i\}) \cup \{j\}$\\
2	&	$(k,\gamma_j,\beta_k) \leftarrow $ Next Best User $(j,U,T)$\\
3	&	\textbf{if} $\gamma_j < v_j(T\backslash\{j\})$ then\\
4	&	\quad $p_j = \gamma_j$\\
5	&	\textbf{else if} $\gamma_j \geq v_j(T\backslash\{j\})$\\
6	&	\quad $p_j = v_j(T\backslash\{j\})$\\
7	&	\textbf{endif}\\
8	&	Return $(T,p_j)$\\
\hline
\end{tabular} 
\end{table}

\begin{table}
\begin{tabular}{r l}
\hline
& \textbf{User Entry Payment $(U,i)$}\\
\hline
1 	&	$S_i \leftarrow \emptyset$, $\beta_i \leftarrow 0$, $j \leftarrow \arg\max_{k\in U\backslash\{i\}}(v_k(S)-b_k)$\\
2 	& 	\textbf{while} $v_i(S_i) > b_i$ and $S_i \neq U\backslash\{i\}$ do\\
3 	& 	\quad $\beta_i \leftarrow \max(\beta_i,v_i(S_i)-v_j(S_i)+b_j)$\\
4 	& 	\quad $S_i \leftarrow S_i \cup \{j\}$;\\
5 	& 	\quad $j \leftarrow \arg\max_{k\in U\backslash\{i\}}(v_k(S)-b_k)$\\
6 	& 	\textbf{endwhile}\\
7	&	Return $\beta_i$\\
\hline
\end{tabular} 
\end{table}

%Explanation of the algorithm
The algorithm follows the greedy approach while selecting users in the screening phase.
It maintains a screening set $S$ (initially set to $\emptyset$), to keep track of screened users in this phase. 
It iterates through $U\backslash S$ and picks the user $i$, having the maximum difference in marginal value and bid $v_i(S) - b_i$, with respect to current set $S$. 
This user is added to $S$ and the process repeats as long as there are users with a positive difference in marginal value and bid with respect to $S$.

After the screening phase, the winner selection phase begins, where the output of the screening phase 
$S$ is provided as an input.
Initially, the set of winning users $T$ is set as $S$, and 
the algorithm iterates through $T$ in the order in which users entered $S$ in the screening phase.

If the difference between the marginal value (in $T$) $v_i(T\backslash \{i\})$ and bid $b_i$ of any user $i$, given by $\sigma_i(T)$, is positive, the algorithm does the following. 
It finds a user $j \notin T$ such that $j = \arg\max_{k\in U\backslash T} v_k(T\backslash \{i\}) - b_k$.
If the difference of the marginal value obtained by replacing user $i$ with user $j$ and the bid of user $j$ is positive, i.e., $v_{j}(T\backslash \{i\}) - b_{j^*} > 0$ 
then $\gamma_i$ is set as $v_i(T\backslash \{i\}) - v_{j}(T\backslash \{i\}) + b_{j}$.
Otherwise $\gamma_i$ is set as infinity (an arbitrarily high value) and $j$ is set as $-1$.
User $j$ is a potential replacement for user $i$, and $\gamma_i$ is the critical value of the bid of user $i$ beyond which it is profitable to replace user $i$ with user $j$. 

For the critical value $\gamma_i \neq \infty$, if user $i$'s bid $b_i$ is higher than $\gamma_i$, 
then user $j$ replaces user $i$ in $T$. First the set $T$ is updated as $T\cup\{j\}\backslash\{i\}$, and  then the algorithm finds a user ${j^*} \notin T$ to determine the payment to be made to user $j$, where ${j^*} = \arg\max_{k\in U\backslash T} v_k(T\backslash \{j\}) - b_k$.
If for user ${j^*}$, $v_{{j^*}}(T\backslash \{j^*\}) - b_{{j^*}} > 0$, 
then the critical value for user $j$, $\gamma_{j} =v_{j}(T\backslash \{j\}) - v_{{j^*}}(T\backslash \{j\}) + b_{{j^*}}$.
Otherwise $\gamma_{j}= \infty$ and ${j}=-1$.
User $j$ is paid an amount equal to $\min(\gamma_{j},v_{j}(T\backslash j))$.

If $\gamma_i= \infty$ and $j=-1$, then there are no users in $U\backslash T$ that can replace user $i$ and increase platform utility.
Therefore, user $i$ is retained in $T$ and the algorithm does the following to determine user $i$'s payment.
It finds the maximum amount that user $i$ could bid and still enter the screening set $S$ and stores it as $\beta_i$.
It pays user $i$, $p_i = \min\{\beta_i, v_i(T)\}$, the minimum of $\beta_i$ and the marginal value of user $i$ in $T$.
Note that by definition $\beta_j$ is set as infinity for any user $j \notin S$ that was not selected in the screening phase of the algorithm.

If $\sigma_i(T)$ is non-positive, then the algorithm finds the user $j\in U\backslash S$ that has maximum $\sigma_j((T\backslash\{i\})\cup\{j\})$.
If $\sigma_j((T\backslash\{i\})\cup\{j\})$ is positive, then user $j$ replaces user $i$ in $T$, and the payment for user $j$ is determined as described above.
Otherwise user $i$ is removed from $T$ and its payment is set to zero.

Finally, after iterating through all elements of $T$, the bad user removal phase occurs.
Any user $i$ in $T$ having non-positive marginal utility with respect to $T$ is removed from $T$.
This ensures that only users having positive marginal utility are retained.

%An example to explain the functioning of the mechanism
\subsection{Walk Through Example}

We use the model situation presented in Fig. \ref{fig:Walk_through_example} as an example to explain the functioning of the SMART algorithm.

\begin{figure}[h]
\centering
\includegraphics{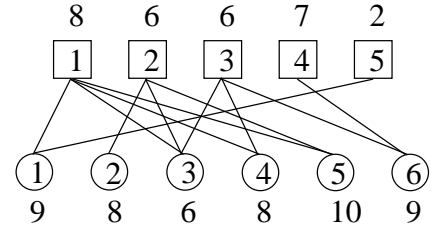}
\caption{Walk Through Example}
\label{fig:Walk_through_example}
\end{figure}

In Fig. \ref{fig:Walk_through_example}, the squares with numbers in them represent users and the circles represent tasks.
The value mentioned above a square corresponds to the bid of the user depicted by the square.
The value mentioned below the circle corresponds to the value of the task represented by that circle.
The working of the algorithm is illustrated as follows

\begin{itemize}

\item User Screening:

\begin{enumerate}
\item $S = \emptyset: v_1(\emptyset) - b_1 = 25, v_2(\emptyset) - b_2 = 18,
v_3(\emptyset) - b_3 = 17, v_4(\emptyset) - b_4 = 2, v_5(\emptyset) - b_5 = 7$.
Hence, $S \leftarrow S \cup \{1\}$.
\item $S = \{1\}: v_2(\{1\}) - b_2 = v(\{1,2\}) - v(\{1\}) - b_2 = 2$, similarly, $v_3(\{1\}) - b_3 = 3, v_4(\{1\}) - b_4 = 2, v_5(\{1\}) - b_5 = -2$.
Hence, $S \leftarrow S \cup \{3\}$.
\item $S = \{1,3\}: v_2(\{1,3\}) - b_2 = 2, v_4(\{1,3\}) - b_4 = -7, v_5(\{1,3\}) - b_5 = -2$.
Hence, $S \leftarrow S \cup \{2\}$.
\item $S = \{1,3,2\}: v_4(\{1,3,2\}) - b_4 = -7, v_5(\{1,3,2\}) - b_5 = -2$. At this point the screening phase ends and $S = \{1,3,2\}$.
\end{enumerate}

\item Winner Selection and Payment:
Initially the set of winning users $T$ is set as $S$, that is $T$ = $\{1,3,2\}$.
The algorithm iterates through $T$ with the index $i$ to determine winners and their payments.

\begin{enumerate}

\item $T(1) = \{1\}$, hence $i = 1$. Now, $v_1(T\backslash\{1\}) = v_1(T) - v_1(T\backslash\{1\}) = 9$.
The Next Best User selection phase proceeds as follows.
$U\backslash T = \{4,5\}$ and $T\backslash\{1\} = \{2,3\}$.
$v_4(\{3,2\}) - b_4 = 0 - 7 = -7$ and $v_5(\{3,2\}) - b_5 = 9 - 2 = 7$.
As $j = \arg\max_{k\in U\backslash T} v_k(\{3,2\}) - b_k$ and $v_5(\{3,2\}) - b_5 = 9 - 2 = 7 > 0$, $j = 5$ and $\beta_5 = \infty$.
$\gamma_1 = v_1(\{3,2\}) - v_5(\{3,2\}) + b_5 = 9 - 9 + 2 = 2$.
The User Entry Payment function executes as follows.
$S_1 = \emptyset$.
In the first iteration $v_2(\emptyset) - b_2 = 18$ is the $\max_{k\in U\backslash\{1\}} v_k(S_1) - b_k$, hence, $S_1 = \{2\}$.
Now, $v_1(\emptyset) - v_2(\emptyset) + b_2 = 33 - 24 + 6 = 15$.
Therefore, $\beta_1 = max(0,15) = 15$.
In the second iteration $v_3(\{2\}) - b_3 = 11$ is the $\max_{k\in U\backslash\{1,2\}} v_k(S_1) - b_k$, hence, $S_1 = \{2,3\}$.
$v_1(\{2\}) - v_3(\{2\}) + b_3 = 6$, therefore, $\beta_1 = max(15,6) = 15$.
In the third iteration $v_5(\{2,3\}) - b_5 = 7$ is the $\max_{k\in U\backslash\{1,2,3\}} v_k(S_1) - b_k$, hence, $S_1 = \{2,3,5\}$ and $\beta_1 = max(15,2) = 15$.
The iterations of the User Entry Payment function end as $v_4(\{2,3,5\}) - b_4 < 0$ and $\beta_1 = 15$ is returned. 
Now, $\sigma_1 = 9 - 8 = 1 > 0$, hence, Cond 1 is chosen. 
As $\gamma_1 = 2 < b_1 = 8$ and $\gamma_1 = 2 < \beta_1 = 15$, Cond 1.2 is chosen. 
The Replace User function in this case runs as follows.
$T = \{5,3,2\}$ and the Next Best User function yields $k = 1, \gamma_5 = 8$.
Since $\gamma_5 = 8 \neq \infty$, $p_5 = \gamma_5 = 8$. 

\item $S(2) = \{3\}$, hence $i = 3$.
Similar to iteration 1, the Next Best User selection phase yields, $j = 4$, $\beta_4 = \infty$ and $\gamma_3 = 15$.
The User Entry Payment function yields $\beta_3 = 7$.
Since, $\sigma_3(T) = 17 - 6 = 11 > 0$, Cond 1 is executed.
As $\gamma_3 = 15 > b_3 = 6$ and $\gamma_3 = 15 > \beta_3 = 7$, Cond 1.3 is followed.
The payment made is $p_3 = \min(\beta_3,v_3(T\backslash\{3\})) = \min(7,17) = 7$.

\item $S(3) = \{2\}$, hence $i = 2$.
The Next Best User function yields $j = 1$, $\gamma_2 = 16$ and $\beta_1$ is re-set to $\infty$.
The User Entry Payment function yields $\beta_2 = 8$. 
The $\sigma_2$ for this user is $v_2(T\backslash\{2\}) - b_2 = 18 - 6 = 12 > 0$.
Hence, Cond 1 is followed.
As $\gamma_2 = 16 > b_2 = 6$ and $\gamma_2 = 16 > \beta_2 = 8$, Cond 1.3 is followed.
The payment made is $p_2 = 8$.
\end{enumerate}

\item Bad User Removal:
$v_2(\{3,5\}) - p_2 = 18 - 8 = 10 > 0$.
$v_3(\{2,5\}) - p_3 = 17 - 7 = 10 > 0$.
$v_5(\{2,3\}) - p_5 = 9 - 8 = 1 > 0$.
Therefore no user is removed from $T$.

\item Result of the Algorithm: The final set of winning users, $T = \{2,3,5\}$ and the payment made to these users is $p_2 = 8, p_3 = 7$ and $p_5 = 8$.
Therefore, the value of the tasks performed by the users is $v(T) = 50$ and the payment made to these users is $\sum_{i\in T} p_i = 23$.
Hence the platform utility in this case is
\begin{equation*}
u(T) = v(T) - \sum_{i\in T} p_i,
\end{equation*}

\begin{equation*}
u(T) = 27.
\end{equation*}

\item For this example, M-Sensing \cite{M-Sensing2012} produces a platform utility of 20.
The set of screened users $S$ of SMART are the set of winning users of M-Sensing $\{1,3,2\}$.
M-Sensing pays each winning user $i$ an amount equal to $\beta_i$ calculated from the User Entry Payment function, therefore, $p_1 = \beta_1 = 15, p_2 = \beta_2 = 8, p_3 = \beta_3 = 8$. 
\end{itemize}

The next section shows that SMART satisfies the four properties of an \emph{efficient} auction.

%Description and proof of the mechanism's properties
\subsection{Properties of the Algorithm}

\begin{lem}\label{computational_efficiency_smart}
SMART is computationally efficient.
\end{lem}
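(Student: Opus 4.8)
The plan is to establish that SMART runs in polynomial time by bounding the cost of each of its three phases and the helper subroutines separately, then summing. First I would fix the relevant size parameters: $n = |U|$ users and $m = |\Gamma|$ tasks, and note that any single evaluation of the value function $v(\cdot)$ (and hence of a marginal value $v_i(S)$ via \eqref{marginalvalue_expression}) can be computed in time polynomial in $m$ and $n$, since $\tau(S)$ is a union of at most $n$ task sets each of size at most $m$ and $\chi$ is assumed computable efficiently. I would make this assumption explicit, because the whole argument rests on treating one call to $v(\cdot)$ as a polynomial-cost primitive.

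Next I would walk through the phases. The screening phase (Phase 1) is a greedy loop that adds at most one user per iteration, so it runs for at most $n$ iterations; each iteration computes an $\arg\max$ over the at most $n$ candidates in $U \setminus S$, each requiring one marginal-value evaluation, giving $\BigO{n^2}$ value evaluations overall. For the winner selection phase (Phase 2), the main loop runs $|T| \le n$ times, and I would argue that the bottleneck is the call to \textbf{User Entry Payment} inside each iteration: that subroutine is itself a greedy loop of at most $n$ iterations, each doing an $\arg\max$ over $\BigO{n}$ users, so it costs $\BigO{n^2}$ value evaluations per call. The \textbf{Next Best User} and \textbf{Replace User} subroutines each perform only a constant number of $\arg\max$ computations over $\BigO{n}$ candidates, so they are $\BigO{n}$ per call and do not dominate. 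Multiplying the $\BigO{n}$ iterations of the outer loop by the $\BigO{n^2}$ cost of \textbf{User Entry Payment} yields $\BigO{n^3}$ value evaluations for Phase 2. Phase 3 (bad user removal) is a single pass over $T$ with one marginal-utility check per user, hence $\BigO{n}$ evaluations.

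Summing, the total number of value evaluations is dominated by Phase 2 at $\BigO{n^3}$, and since each evaluation is polynomial in $m$ and $n$, the overall running time is polynomial, which is exactly the claim of computational efficiency. I expect the main obstacle to be the subroutine accounting rather than the high-level loop counts: one has to verify that nested calls do not compound more than claimed, in particular that \textbf{Replace User} (which itself calls \textbf{Next Best User}) stays $\BigO{n}$ and that \textbf{User Entry Payment}'s inner greedy loop genuinely terminates in $\BigO{n}$ iterations because it adds a distinct user to $S_i$ each time. A secondary subtlety worth flagging is that the argument is only as strong as the assumption on $\chi$; if $\chi(\tau)$ were an arbitrary combinatorial function with no efficient oracle, computational efficiency would fail, so I would state clearly that we assume $\chi$ (and therefore $v$) is evaluable in polynomial time.
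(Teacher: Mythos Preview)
Your proposal is correct and follows essentially the same phase-by-phase accounting as the paper's proof. The one noteworthy difference is that the paper exploits the bound $|S| \le m$ (the screening phase can add at most $m$ users, since after all $m$ tasks are covered no candidate has positive marginal value), which lets it state the sharper overall bound $\BigO{nm^3}$, whereas you use the looser $|S| \le n$ and obtain $\BigO{n^3}$ value evaluations; both are polynomial, so the lemma follows either way. Your explicit caveat that $\chi$ must admit an efficient oracle is a useful clarification that the paper leaves implicit.
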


\begin{proof} User Screening phase takes $O(nm^2)$ time:  
The screening phase selects at most $m$ users since we have at most $m$ tasks, and each step to select the best user with largest marginal utility takes $O(nm)$ time.

Winner Selection phase takes $O(nm^3)$ time:  Again, since initially the size of set $|T|=|S| \le m$, the main routine in the Winner Selection phase that runs through the elements of $T$ takes at most $O(m)$ time. There are four sub-routines in the Winner Selection phase: Next Best User Function, Replace User Function, User Entry Payment and Bad User Removal.
The Next Best User function takes at most $O(n)$ time since it finds the possible replacement among the $n$ users. 
Since Replace User function invokes the Next Best User function, it also takes at most $O(n)$ time.
The User Entry Payment function is similar to the User screening routine and takes $O(nm^2)$ time.
The Bad User Removal phase also sequentially runs for $|T|\le m$ time, hence it takes $O(m)$ time.
Hence, the computational complexity of the SMART algorithm is at most $O(nm^3)$, hence polynomial. 
Therefore, SMART is computationally efficient.
\end{proof}

\begin{lem}\label{lem:offrational}
SMART is individually rational.
\end{lem}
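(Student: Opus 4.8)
The plan is to reduce individual rationality to the single inequality $p_i \ge b_i$ for every user $i$ that ends up in the winning set $T$. By definition \eqref{personal_utility} the selfish utility is $\omega_i = p_i - c_i$ for a selected user and $0$ otherwise; since truthful bidding (established separately) makes $b_i = c_i$, and a rational user never bids below cost so that $b_i \ge c_i$ in any case, the bound $p_i \ge b_i$ immediately gives $\omega_i \ge 0$. Users who are never selected, or who are discarded in Phase 3 where the payment is explicitly reset to zero, trivially satisfy $\omega_i = 0 \ge 0$, so only the payments assigned in Phase 2 need to be examined.

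First I would handle the users that keep their original slot, that is those paid under Cond 1.1 or Cond 1.3. In Cond 1.1 the guard $\gamma_i - b_i \ge 0$ forces $p_i = \gamma_i \ge b_i$ directly. In Cond 1.3 the payment is $p_i = \min(\sigma_i(T) + b_i, \beta_i) = \min(v_i(T\setminus\{i\}), \beta_i)$. This branch is entered only when $\sigma_i(T) > 0$, so $v_i(T\setminus\{i\}) > b_i$; and $\beta_i \ge b_i$ because $\beta_i$ returned by \textbf{User Entry Payment} is the largest bid with which user $i$ could still have entered the screening set $S$, and user $i$ did in fact enter $S$ while bidding $b_i$. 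Hence both arguments of the $\min$ dominate $b_i$, giving $p_i \ge b_i$.

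The substantive case is a user $j$ brought into $T$ by a replacement (Cond 1.2 or Cond 2), whose payment is set inside \textbf{Replace User} as $p_j = \min(\gamma_j, v_j(T\setminus\{j\}))$, and I would argue both arguments dominate $b_j$. Since a replacement only fires when the incoming user has positive marginal surplus, $v_j(T\setminus\{j\}) > b_j$. For the other term, recall that $j$ was chosen as the maximizer of $v_k(T\setminus\{i\}) - b_k$ over $U\setminus T$, and that after the swap the reference set $T\setminus\{j\}$ equals the old $T\setminus\{i\}$; then $j^* = \arg\max_{k} v_k(T\setminus\{j\}) - b_k$ is taken over essentially the same candidate pool. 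Writing out $\gamma_j = v_j(T\setminus\{j\}) - v_{j^*}(T\setminus\{j\}) + b_{j^*}$, the desired $\gamma_j \ge b_j$ is exactly $v_j - b_j \ge v_{j^*} - b_{j^*}$ (marginals with respect to $T\setminus\{j\}$), which follows from the maximality of $j$.

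The main obstacle is precisely this last step, because the candidate pool for $j^*$ is not identical to the pool over which $j$ was maximal: the swap removes $j$ from $U\setminus T$ but re-inserts the ejected user $i$. I would therefore split on whether $j^* = i$ or $j^* \ne i$. If $j^* \ne i$ then $j^*$ lay in the original pool and maximality of $j$ gives the inequality immediately. If $j^* = i$, I would fall back on the guard that triggered the replacement: under Cond 1.2 the condition $\gamma_i < b_i$ unwinds to $v_i(T\setminus\{i\}) - b_i < v_j(T\setminus\{i\}) - b_j$, so $j$ out-surpluses the ejected $i$, which is exactly what is needed; under Cond 2 the ejected user satisfies $\sigma_i(T) \le 0$ while $j$ was chosen with positive surplus, again settling the inequality. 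Combining the two arguments of the $\min$ then yields $p_j \ge b_j$, which completes the proof.
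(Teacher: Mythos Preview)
Your proposal is correct and follows essentially the same route as the paper's proof: both reduce to $p_i \ge b_i$, dispatch Cond~1.1 and Cond~1.3 directly from their guards, and handle the replacement branches (Cond~1.2 and Cond~2) by bounding each argument of $p_j = \min(\gamma_j, v_j(T\setminus\{j\}))$, splitting the $\gamma_j$ analysis on whether the next-best candidate $j^*$ equals the ejected user $i$ or lies in the original pool. The only cosmetic difference is organizational: the paper treats Cond~1.2 and Cond~2 as separate top-level cases, whereas you merge them and branch on $j^*$; the invariants used ($T_{\text{new}}\setminus\{j\} = T_{\text{old}}\setminus\{i\}$, maximality of $j$, and the triggering guard $\gamma_i < b_i$ or $\sigma_i(T)\le 0$) are identical.
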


\begin{proof} Since user $i$ bids $b_i$ and user $i$ is assumed to be rational, $b_i - c_i > 0$.
Therefore, to prove individual rationality of the SMART algorithm, we only need to prove that
$\forall \ i\  \in T$, $p_i \geq b_i$, where $T$ is the final set of winning users.
In the SMART Algorithm, the payment made to winning users depends on either Cond 1.1, or Cond 1.2, or Cond 1.3, or Cond 2. We discuss each case individually. 

\begin{itemize}
\item  Cond 1.1. User $i$ is paid according to Cond 1.1 if and only if $\gamma_i \ge b_i$, where $\gamma_i$ is the critical bid value above which it is profitable to replace user $i$ with another user from $U\backslash T$.
Since user $i$ is paid $p_i = \gamma_i$, hence, $p_i \ge b_i$.

\item  Cond 1.2. In this case $\gamma_i < b_i$ and $\gamma_i \neq \infty$, i.e. there exits an user $j\in U\backslash T$ that will replace user $i\in T$, where user $j$ is such that   
 $j = \arg\max_{k\in U\backslash T} v_k(T\backslash \{j\}) - b_k$, computed at Line 14 of the SMART (by running the Next Best User function). This is the most involved cases to prove, where we have to show that the payment made to a replacement user $j$ is more than or equal to its bid.
 
Let $T$ before the replacement be $T_{\text{old}}$, and after the replacement be $T_{\text{new}} = T\cup\{j\}\backslash\{i\}$. 
Note that, since $\gamma_i \neq \infty$ that is computed at Line 14 of the SMART (by running the Next Best User function), it implies that 
\begin{equation}\label{eq:rational1}
v_j(T_{\text{old}}\backslash \{i\}) - b_j > 0,
\end{equation}
from Line 2 of the Next Best User function.

To find the payment for the replacement user $j$, $p_j$, at Line 24 of SMART, first the Replace User function is called with input $\{T_{\text{old}},i,j\}$, that in turn calls the Next Best User function with input 
$\{T_{\text{old}},j\}$. The Next Best User function finds user 
 $j^{*} = \arg\max_{k\in U\backslash T_{\text{new}}} v_k(T_{\text{new}}\backslash \{j\}) - b_k$.
 Since $T_{\text{new}} = T\cup\{j\}\backslash\{i\}$, $j^* \in  U\backslash T_{\text{old}}$ or $j^*=i$. 
 
 \begin{itemize}
\item 
In the Next Best User Function if $v_{j^{*}}(T_{\text{new}}\backslash \{j\}) - b_{j^{*}} > 0$, then the critical value for user $j$ is 
 $\gamma_{j} = v_j(T_{\text{new}}\backslash \{j\}) -  v_{j^*}(T_{\text{new}}\backslash \{j\}) + b_{j^*}$. Then returning to the  Replace User function to find $p_j$, if $\gamma_j \le  v_j(T_{\text{new}}\backslash \{j\})$, then $p_j=\gamma_j$, otherwise $p_j=v_j(T_{\text{new}}\backslash \{j\})$. 
Note that from (\ref{eq:rational1}), we know that $v_j(T_{\text{new}}\backslash \{j\}) \ge b_j$. 
Since $T_{new}\backslash \{j\} = T_{old}\backslash \{i\}$ we only need to prove that whenever $p_j=\gamma_j$, $\gamma_j \ge b_j$.
 
If $j^* \in  U\backslash T_{\text{old}}$, then since user $j$ was the best replacement for user $i$ among users in $T_{\text{old}}\backslash \{i\} = T_{\text{new}}\backslash \{j\}$ 
 computed at line 14 of SMART, we have that $v_j(T_{\text{new}}\backslash \{j\}) - v_{j^*}(T_{\text{new}}\backslash \{j\}) + b_{j^*} - b_j \ge 0$. Thus, $\gamma_j \ge b_j$. 
 
Otherwise, if $j^*=i$, then from Line 3 of the Next Best User function, 
$\gamma_j = v_j(T_{\text{new}}\backslash \{j\}) - v_i(T_{\text{new}}\backslash \{j\}) + b_i$, which is also equivalent to $\gamma_j = v_j(T_{\text{old}}\backslash \{i\}) - v_i(T_{\text{old}}\backslash \{i\}) + b_i$. Since with Cond 1.2, $\gamma_i < b_i$, where $\gamma_i = v_i(T_{\text{old}}\backslash \{i\}) - v_j(T_{\text{old}}\backslash \{i\}) + b_j$.
$v_i(T_{\text{old}}\backslash \{i\}) - v_j(T_{\text{old}}\backslash \{i\}) + b_j - b_i < 0$.
Therefore, $\gamma_j > b_j$. 

\item If $v_{j^{*}}(T_{\text{new}}\backslash \{j\}) - b_{j^{*}} \le 0$, then $\gamma_j=\infty$ and $p_j = v_j(T_{\text{new}}\backslash \{j\})$. The claim follows from (\ref{eq:rational1}), where it is shown that $v_j(T_{\text{new}}\backslash \{j\}) \ge  b_j$.
  
\end{itemize}

\item  Cond 1.3. Here, $p_i = \min(\sigma_i(T) + b_i,\beta_i)$.
Let $p_i = \sigma_i(T) + b_i$. As $\sigma_i(T) = v_i(T\backslash\{i\}) - b_i$, $p_i = v_i(T\backslash\{i\})$.
Since Cond 1.3 is a Sub-Condition of Cond 1, which requires that $v_i(T\backslash\{i\}) > b_i$, $p_i > b_i$.
Let $p_i = \beta_i$. Note that
$\beta_i$ represents the maximum bid that user $i$ could have bid and still enter $S$. 
Since, user $i$ was selected, $b_i < \beta_i$ which implies $b_i < p_i$.

\item Cond 2. The situation becomes equivalent to Cond 1.2.
As the payment made according to Cond 1.2 has been proved to be individually rational, so is the payment made according to Cond 2.
\end{itemize}
\end{proof}

We next show that the proposed SMART algorithm is profitable.
\begin{lem}\label{Profitability_Lemma}
SMART is profitable.
\end{lem}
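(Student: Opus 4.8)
The plan is to reduce profitability, $u(T)\ge 0$, to the single guarantee produced by Phase 3 of the algorithm. The Bad User Removal loop deletes every $i$ with $u_i(T\setminus\{i\})\le 0$, so when it terminates every surviving winner $i\in T$ satisfies
\[
u_i(T\setminus\{i\}) = v_i(T\setminus\{i\}) - p_i > 0,
\]
i.e. each payment lies strictly below that user's marginal value in the \emph{final} winning set. Before using this I would verify that the inequality holds simultaneously for all retained users at termination, even though Phase 3 removes users sequentially and each deletion perturbs the marginal values of the others. The point is that deleting a user only shrinks the conditioning set, which (by the diminishing-returns property invoked below) can only \emph{increase} a retained user's marginal value while leaving its payment fixed; hence no previously certified inequality is ever undone by a later removal.

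Granting that, I would establish $u(T)\ge 0$ by telescoping over $T$. Fix any ordering $T=\{i_1,\dots,i_\ell\}$, set $T_k=\{i_1,\dots,i_k\}$ and $T_0=\emptyset$. Since $v(\emptyset)=\chi(\emptyset)=0$ and no payment is charged on the empty set, $u(\emptyset)=0$, and
\[
u(T)=\sum_{k=1}^{\ell}\big(u(T_k)-u(T_{k-1})\big)=\sum_{k=1}^{\ell}\big(v_{i_k}(T_{k-1})-p_{i_k}\big).
\]
It then suffices to show each summand is nonnegative, for which I must upgrade the certified inequality $v_{i_k}(T\setminus\{i_k\})-p_{i_k}>0$ to its prefix form $v_{i_k}(T_{k-1})-p_{i_k}\ge 0$.

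This upgrade is the crux. Because $T_{k-1}\subseteq T\setminus\{i_k\}$, it follows from the diminishing-returns (submodularity) property of the value function $v(S)=\chi\big(\bigcup_{i\in S}\tau_i\big)$: for $A\subseteq B$ and $x\notin B$, one has $v_x(A)\ge v_x(B)$. This holds for the coverage structure of $v$ whenever $\chi$ is monotone and submodular over task sets — in particular for the linear $\chi$ — since a user can only contribute fewer fresh tasks once more tasks are already covered. Hence $v_{i_k}(T_{k-1})\ge v_{i_k}(T\setminus\{i_k\})>p_{i_k}$, every summand is positive, and $u(T)>0$.

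The main obstacle is exactly this submodularity step. The per-user payment bounds assembled in the individual-rationality argument are all of the form $p_i\le v_i(T\setminus\{i\})$ relative to the final set, whereas the telescoping decomposition exposes marginals relative to growing prefixes; reconciling the two genuinely requires the coverage/union structure of $v$. I would therefore state the submodularity of $v$ explicitly (or restrict to monotone submodular $\chi$) as the hypothesis under which profitability holds, and flag that for an arbitrary combinatorial $\chi$ the monotone-marginal property — and hence the claim — may fail. The simultaneity issue in Phase 3 is a secondary, bookkeeping obstacle that the same monotonicity of marginals under set-shrinking resolves.
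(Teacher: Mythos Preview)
Your argument is correct (under the submodularity hypothesis you flag) but follows a genuinely different route from the paper. The paper does \emph{not} argue directly from Phase~3; instead it proves profitability indirectly by establishing a separate comparison lemma: the platform utility of SMART is always at least that of M-Sensing, and then invokes the known profitability of M-Sensing. That comparison is carried out by tracking, case by case and by induction over iterations, how each retain/remove/replace step in the Winner Selection phase affects $v(T)-\sum p_l$ relative to $v(S)-\sum \beta_l$.

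What each approach buys: your telescoping argument from the Phase~3 invariant $v_i(T\setminus\{i\})>p_i$, lifted via diminishing returns to $v_{i_k}(T_{k-1})>p_{i_k}$, is shorter and self-contained, and it makes the structural assumption (monotone submodular $\chi$) explicit rather than implicit. The paper's route is longer but delivers a strictly stronger conclusion as a byproduct---SMART dominates M-Sensing in utility---which the authors explicitly value (``one short proof for showing two results''). Your proof would not recover that comparison. Conversely, the paper's proof leans on M-Sensing's profitability as a black box, whereas yours does not.
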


\begin{proof} To prove this Lemma, we take an indirect route by showing that the utility of SMART is at least as much   as M-Sensing (Lemma \ref{SMART_more_profitable_than_M-Sensing}), and the result follows since M-Sensing \cite{M-Sensing2012} is profitable. \end{proof} 
This gives us one short proof for showing two results, that SMART is profitable and is at least as profitable as M-Sensing.

\begin{lem}\label{SMART_more_profitable_than_M-Sensing}
The utility of SMART is greater than or equal to the utility of M-Sensing \cite{M-Sensing2012}.
\end{lem}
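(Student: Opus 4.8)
The plan is to exploit the structural fact, visible in the walk-through example and built into the construction, that M-Sensing's winning set is exactly the screening set $S$ produced in Phase~1 of SMART, and that M-Sensing pays each winner $i\in S$ the amount $\beta_i$ returned by the User Entry Payment routine. Hence the M-Sensing utility is $u_{\text{MS}} = v(S) - \sum_{i\in S}\beta_i$, and it suffices to show $u(T) \ge u_{\text{MS}}$, where $T$ and $\{p_i\}$ are the final winner set and payments of SMART. I would set this up by noting that the reference configuration $(S,\{\beta_i\})$ has utility exactly $u_{\text{MS}}$, and then argue that every subsequent action of SMART is weakly utility-increasing relative to this reference.

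The actions SMART performs in Phases~2 and 3 are of exactly three types: (i) it keeps a screened user $i$ but assigns a payment $p_i \le \beta_i$ (Cond~1.1, where $p_i=\gamma_i\le\beta_i$ by the guard on Line~20, and Cond~1.3, where $p_i=\min(\sigma_i(T)+b_i,\beta_i)\le\beta_i$); (ii) it replaces a user $i$ by a user $j\in U\backslash T$ (Cond~1.2 and Cond~2); and (iii) it deletes a user $i$ outright (Cond~2 with $\gamma_i=\infty$, and the Bad User Removal of Phase~3). For type~(i) the value contribution is unchanged while the payment only drops from $\beta_i$ to $p_i$, so the utility weakly increases. For type~(iii) deletion occurs only when the user's marginal contribution to utility is non-positive: in Phase~3 this is the explicit test $u_i(T\backslash\{i\})\le 0$, while in Cond~2 we have $\sigma_i(T)=v_i(T\backslash\{i\})-b_i\le 0$, so using $\beta_i\ge b_i$ (established in the individual-rationality argument for Cond~1.3) the contribution $v_i(T\backslash\{i\})-\beta_i\le v_i(T\backslash\{i\})-b_i\le 0$; removing a non-positive contribution cannot lower the utility.

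The main obstacle is type~(ii), the replacement step. Here I must show that swapping $i$ for $j$ and paying $j$ the amount $\min(\gamma_j, v_j(T_{\text{new}}\backslash\{j\}))$ does not decrease the platform utility relative to keeping $i$ and paying $\beta_i$. The lever is that a replacement is triggered precisely when it is profitable, namely $\gamma_i<b_i$ in Cond~1.2, or $\sigma_i(T)\le 0$ with an available profitable $j$ in Cond~2; combined with the facts that $j$ maximizes $v_k(T\backslash\{i\})-b_k$ over $U\backslash T$ and that $p_j\le v_j(T_{\text{new}}\backslash\{j\})$ by the Replace User routine, the post-swap utility can be bounded below by the pre-swap utility.

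The genuinely delicate point, and what I expect to cost the most work, is that $v$ is a general (non-linear) set function, so the marginal values $v_k(\cdot)$ of the still-unprocessed screened users shift after each swap or deletion, and the reference payments $\beta_i$ are themselves defined through a set-dependent screening. I would therefore not argue term-by-term but instead carry an inductive invariant over the iterations of the winner-selection loop: after processing the first $\ell$ users of $T$, the current partial utility is at least the corresponding M-Sensing partial utility. Verifying that each of the three cases preserves this invariant, with the marginal values and the $\beta_i$ evaluated consistently with respect to the evolving set $T$, is the heart of the argument; profitability of SMART (Lemma~\ref{Profitability_Lemma}) then follows since M-Sensing is profitable.
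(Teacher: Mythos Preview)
Your proposal is correct and follows essentially the same approach as the paper's proof: both start from $T=S$ with M-Sensing payments $\beta_i$, then show that each action in Phases~2 and~3 (retain with $p_i\le\beta_i$, replace, or remove) weakly increases utility, using $\beta_i\ge b_i$ from M-Sensing's individual rationality and an induction over iterations to handle the evolving set $T$. The paper carries out precisely the inductive invariant you describe, with the replacement case established via the key inequality $v_i(T_{\text{old}}\backslash\{i\})-\beta_i \le v_j(T_{\text{new}}\backslash\{j\})-p_j$, which follows from the maximality of $j$ in the Next Best User routine and the payment cap $p_j\le v_j(T_{\text{new}}\backslash\{j\})$.
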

\begin{proof}
The output set $S$ of the screening phase of SMART is equal to the final selection set of M-Sensing.
SMART then refines $S$ further through the Winner Selection phase and the Bad User removal phase. 
In the winner selection phase, SMART initially sets the set of winning users $T$ to be equal to $S$ and then iteratively updates $T$. Moreover, note that with M-Sensing, and SMART, the payment for each user $i \in S$, is $\beta_i$, and $p_i = \min\{\gamma_i, \sigma_i(T)+b_i, v_i\beta_i\}$, respectively. Thus, $p_i\le \beta_i$. 

So initially, 
\begin{equation}\label{Initial_SMART_M-Sensing}
v(S) - \sum_{l\in S} \beta_l = v(T) - \sum_{l\in T} \beta_l,
\end{equation}
since $T=S$.

In each iteration of the winner selection phase of SMART the current user, say user $i$ is either retained, removed or replaced.

\begin{itemize}

\item 
If the current user $i$ is {\it retained} there is no change in $T$, and still $T=S$, however, since $p_i \leq \beta_i$,

\begin{equation}\label{UserRetained}
v(S) - \sum_{l\in S} \beta_l \leq v(T) - p_i - \sum_{l\in S\backslash \{i\}} \beta_l.
\end{equation}

\item
The current user $i$ is {\it removed} if and only if $\sigma_i(T) - b_i \leq 0$.
Therefore, $v_i(T\backslash \{i\}) - b_i \leq 0$, which implies 
$v(T) - v(T\backslash \{i\}) - b_i \leq 0$, and finally, $v(T\backslash \{i\}) \geq v(T) - b_i$.

Since M-Sensing is rational, $\beta_i \geq b_i$, we have  $v(T\backslash \{i\}) \geq v(T) - \beta_i$.

Therefore, after removing user $i$, with $T \leftarrow T \backslash \{i\}$,
\begin{equation}\label{UserRemoved}
v(S) - \sum_{l\in S} \beta_l \leq v(T) - \sum_{l\in T} \beta_l.
\end{equation}

\item
The {\it replacement} of the current user $i$ can occur at two points in the SMART algorithm, in Cond 1.2 at lines 23, 24 and in Cond 2 at lines 30, 31.
In both cases, the Replace User function is called to replace user $i$ with another user say user $j\in U\backslash T$.
Let the set of winning users be $T_{\text{old}}$ before replacement and $T_{\text{new}}$ after replacement.
$T_{\text{new}} = (T_{\text{old}} \backslash \{i\}) \cup \{j\}$.

\begin{itemize}

\item
Consider the replacement done in lines 23 and 24.
In this situation $\gamma_i \leq b_i$, $\gamma_i \leq \beta_i$ and $\gamma_i \neq \infty$.
The Replace User function first performs replacement and then calls the Next Best User function with inputs as $\{j,U,T_{\text{new}}\}$.
The Next Best User function finds $k = \arg\max_{m\in U\backslash T_{\text{new}}} v_m(T_{\text{new}}\backslash \{j\}) - b_m$ and returns $\gamma_j = v_j(T_{\text{new}}\backslash\{j\}) - v_k(T_{\text{new}}\backslash\{j\}) + b_k$.
Since $i \in U\backslash T_{\text{new}}$, therefore, $v_k(T_{\text{new}}\backslash \{j\}) - b_k \geq v_i(T_{\text{new}}\backslash \{j\}) - b_i$.
This implies that in the worst case, the replacement user $k$ is  same as user $i$.
Therefore, in the worst case $\gamma_j = v_j(T_{\text{new}}\backslash\{j\}) - v_i(T_{\text{new}}\backslash\{j\}) + b_i$.
Since the payment made in the Replace User function is $p_j = \min(\gamma_j,v_j(T_{\text{new}}\backslash\{j\}))$, $p_j \leq \gamma_j$.
Therefore,

$p_j \leq v_j(T_{\text{new}}\backslash\{j\}) - v_i(T_{\text{new}}\backslash\{j\}) + b_i$, which on rearranging, 
$v_i(T_{\text{new}}\backslash\{j\}) - b_i \leq v_j(T_{\text{new}}\backslash\{j\}) - p_j$.
Since $T_{\text{new}}\backslash \{j\} = T_{\text{old}}\backslash \{i\}$, it follows that 
$v_i(T_{\text{old}}\backslash\{i\}) - b_i \leq v_j(T_{\text{new}}\backslash\{j\}) - p_j$.

As M-Sensing is individually rational, $\beta_i \geq b_i$, therefore, 
$v_i(T_{\text{old}}\backslash\{i\}) - \beta_i \leq v_j(T_{\text{new}}\backslash\{j\}) - p_j$.

\item
Consider the replacement done in lines 30 and 31.
In this situation $\sigma_i(T_{\text{old}}) \leq 0$ and $\gamma_i \neq \infty$.
Since, $\gamma_i \neq \infty$, $v_j(T_{\text{old}} \backslash \{i\}) - b_j \geq 0$.
Since, $\sigma_i(T_{\text{old}}) \leq 0$, $v_i(T_{\text{old}}\backslash\{i\}) - b_i \leq 0$.
As M-Sensing is individually rational, $\beta_i \geq b_i$ therefore,
$v_i(T_{\text{old}}\backslash\{i\}) - \beta_i \leq 0$.

Also, since the payment made in the replace user function $p_j = \min(\gamma_j,v_j(T_{\text{new}}\backslash\{j\}))$, $p_j \leq v_j(T_{\text{new}}\backslash\{j\})$.
Hence, 
$
v_j(T_{\text{new}}\backslash\{j\}) - p_j \geq 0$.

Therefore, $v_i(T_{\text{old}}\backslash\{i\}) - \beta_i \leq v_j(T_{\text{new}}\backslash\{j\}) - p_j$.

\end{itemize}

Consider the first iteration of SMART in the winner selection phase for replacement.
Here, $S = T_{\text{old}}$ and $T_{\text{new}}\backslash \{j\} = S \backslash \{i\}$.
Therefore,
$v_i(S\backslash\{i\}) - \beta_i \leq v_j(T_{\text{new}}\backslash\{j\}) - p_j$, which implies 
$v(S) - v(S\backslash\{i\}) - \beta_i \leq v(T_{\text{new}}) - v(T_{\text{new}}\backslash\{j\}) - p_j$.
Since $v(T_{\text{new}}\backslash\{j\}) = v(S\backslash\{i\})$,
$v(S) - \beta_i \leq v(T_{\text{new}}) - p_j$.
Subtracting $\sum_{l\in S\backslash\{i\}} \beta_l$ to both sides, we get
\begin{equation}\label{ReplacementFirstIteration}
v(S) - \sum_{l\in S} \beta_l \leq v(T_{\text{new}}) - p_j - \sum_{l\in S\backslash\{i\}} \beta_l.
\end{equation}

Now consider the situation when $n$th replacement occurs.
Let the set of winning users before replacement be $T_{\text{old}_n}$ and let the set of winning users after replacement be $T_{\text{new}_n}$.
Therefore,

\begin{equation}\label{T_old_n}
T_{\text{old}_n} = (S\backslash\{i_1,\ldots,i_{n-1}\})\cup\{j_1,\ldots,j_{n-1}\}),
\end{equation}

\begin{equation}\label{T_new_n}
T_{\text{new}_n} = (S\backslash\{i_1,\ldots,i_n\})\cup\{j_1,\ldots,j_n\}).
\end{equation}

Generalizing  \eqref{ReplacementFirstIteration} at the $n$th iteration, we get, $v(S) - \sum_{l\in S} \beta_l$
\begin{equation}\label{Replacement_N_Iteration}
 \leq v(T_{\text{new}_n}) - \sum_{m\in \{j_1,\ldots,j_n\}} p_j - \sum_{l\in S\backslash\{i_1,\ldots,i_n\}} \beta_l.
\end{equation}

Let \eqref{Replacement_N_Iteration} hold true.
Lets analyze the $n+1^{th}$ iteration.
Using the same notation, 

\begin{equation}\label{T_old_n+1}
T_{\text{old}_{n+1}} = (S\backslash\{i_1,\ldots,i_n\})\cup\{j_1,\ldots,j_n\}),
\end{equation}

\begin{equation}\label{T_new_n+1}
T_{\text{new}_{n+1}} = (S\backslash\{i_1,\ldots,i_{n+1}\})\cup\{j_1,\ldots,j_{n+1}\}).
\end{equation}

At the $n+1$th replacement, 
\begin{equation}
\nonumber v_{i_{n+1}}(T_{\text{old}_{n+1}}\backslash\{i_{n+1}\}) - \beta_i \leq v_{j_{n+1}}(T_{\text{new}_{n+1}}\backslash\{j_{n+1}\}) - p_j.
\end{equation}

Therefore, $v(T_{\text{old}_{n+1}}) - v(T_{\text{old}_{n+1}}\backslash\{i_{n+1}\}) - \beta_i \leq v(T_{\text{new}_{n+1}}) - v(T_{\text{new}_{n+1}}\backslash\{j_{n+1}\}) - p_j$.
Since, $T_{\text{old}_{n+1}}\backslash\{i_{n+1}\} = T_{\text{new}_{n+1}}\backslash\{j_{n+1}\}$,

\begin{equation}
\nonumber v(T_{\text{old}_{n+1}}) - \beta_i \leq v(T_{\text{new}_{n+1}}) - p_j.
\end{equation}

From \eqref{T_new_n} and \eqref{T_old_n+1}, we note that $T_{\text{old}_{n+1}} = T_{\text{new}_n}$.
Substituting for $T_{\text{new}_n}$ from \eqref{Replacement_N_Iteration}, we get, 
$v(S) - \sum_{l\in S} \beta_l - \beta_{i_{n+1}} \leq v(T_{\text{new}_{n+1}}) - p_{j_{n+1}} - \sum_{m\in \{j_1,\ldots,j_n\}} p_j - \sum_{l\in S\backslash\{i_1,\ldots,i_n\}} \beta_l$.
Hence, $v(S) - \sum_{l\in S} \beta_l$
\begin{equation}\label{Replacement_N+1_Iteration}
 \leq v(T_{\text{new}_{n+1}}) - \sum_{m\in \{j_1,\ldots,j_{n+1}\}} p_j - \sum_{l\in S\backslash\{i_1,\ldots,i_{n+1}\}} \beta_l.
\end{equation}

Therefore, from  \eqref{ReplacementFirstIteration}, \eqref{Replacement_N_Iteration}, and \eqref{Replacement_N+1_Iteration}, using  induction we get that, $v(S) - \sum_{l\in S} \beta_l$
\begin{equation}\label{UserReplaced}
 \leq v(T_{\text{new}_n}) - \sum_{m\in \{j_1,\ldots,j_n\}} p_j - \sum_{l\in S\backslash\{i_1,\ldots,i_n\}} \beta_l,
\end{equation}

is true for any arbitrary $n$.

\end{itemize}

In the winner selection phase, SMART iterates through all the elements of $S$ in a sequential manner.
Using  \eqref{UserRetained}, \eqref{UserRemoved}, and, \eqref{UserReplaced}, we can conclude that at the end of the winner selection phase,
\begin{equation}\label{SMARTM-Sensing}
v(S) - \sum_{l\in S} \beta_l \leq v(T) - \sum_{l\in T} p_l,
\end{equation}
where the LHS  represents the utility of M-Sensing and the RHS represents the utility of SMART, since the Bad User Removal function does not decrease the utility.

\end{proof}

\begin{remark}
In terms of profitability, M-Sensing performs the same as SMART, when SMART retains all the users obtained in the Screening phase and pays each user $i$ an amount $\beta_i$.
This happens when all of the following conditions are satisfied for each user $i\in S$, where $S$ is the set of screened users obtained after the screening phase of SMART.
\begin{itemize}
\item $v_i(S\backslash\{i\}) > 0$.
\item The value of $\gamma_i$ as computed by the Next Best User function at line 14 of SMART, is $\infty$ in which case no valid replacement user exists or when $\gamma_i > \beta_i$.
\item $\beta_i < v_i(S\backslash\{i\})$.
\end{itemize}
In all other scenarios, SMART has larger utility than M-Sensing.
\end{remark}

Next, we show the most important property of SMART, its truthfulness. Towards that end we will use the  Myerson's Theorem \cite{Myerson1981Auction}.
\begin{thm}\cite{Myerson1981Auction}\label{Myerson_Theorem}
A reverse auction is considered truthful if and only if
\begin{itemize}
\item The selection rule is monotone. If a user $i$ wins the auction by bidding $b_i$, it would also win the auction by bidding an amount $b_i'$, where $b_i' < b_i$.
\item Each winner is paid a critical amount. If a winning user submits a bid greater than this critical value, it will not get selected.
\end{itemize}
\end{thm}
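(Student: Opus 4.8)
The plan is to prove this as a standard single-parameter characterization by fixing the bids $b_{-i}$ of all users other than $i$ and reducing to a single agent. With $b_{-i}$ fixed, the outcome for user $i$ is captured by a selection indicator $x_i(b_i)\in\{0,1\}$ and a payment $p_i(b_i)$, and the selfish utility of a user whose true cost is $c_i$ but who reports $b_i$ is $x_i(b_i)\big(p_i(b_i)-c_i\big)$, with a loser receiving $0$. Truthfulness means reporting $b_i=c_i$ maximizes this quantity for every $c_i$. I would then prove the two directions of the equivalence separately.

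For necessity (truthful implies the two conditions) I would first obtain monotonicity by a two-report exchange argument. Taking cost levels $c_1<c_2$ and writing the incentive constraint at each (the truthful report beats the other), I would add the two inequalities; after the payment terms cancel this collapses to $(c_2-c_1)\big(x_i(c_1)-x_i(c_2)\big)\ge 0$, and since $c_2-c_1>0$ it forces $x_i(c_1)\ge x_i(c_2)$. Hence a winner at $b_i$ stays a winner at every $b_i'<b_i$, which is exactly the stated monotone selection rule and yields a threshold $\gamma_i$ with $x_i(b_i)=1$ iff $b_i<\gamma_i$. For the payment I would first show it is constant across all winning reports: comparing two winning reports $b$ and $b'$ through the incentive constraints at true costs $b$ and $b'$ gives $p_i(b)=p_i(b')$, so a winner's payment cannot depend on its own bid. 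Calling this common value $p^*$, I would pin it to the threshold using only the option to report on the other side of $\gamma_i$: a winner with $c_i\uparrow\gamma_i$ could instead report above $\gamma_i$ and lose at zero utility, so truthfulness forces $p^*-c_i\ge 0$ and hence $p^*\ge\gamma_i$; conversely a would-be loser with $\gamma_i<c_i<p^*$ could profitably underbid to win, which truthfulness forbids unless $p^*\le\gamma_i$. Together these give $p^*=\gamma_i$, the critical value.

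For sufficiency (the two conditions imply truthfulness) I would verify that no unilateral deviation helps. If a truthful user wins ($c_i<\gamma_i$) its utility is $\gamma_i-c_i\ge 0$; overbidding either keeps it a winner at the unchanged payment $\gamma_i$ or makes it a loser at utility $0$, and underbidding keeps it a winner (by monotonicity) at the same critical payment, so neither deviation helps. If a truthful user loses ($c_i>\gamma_i$) its utility is $0$; the only way to win is to underbid below $\gamma_i$, earning $\gamma_i-c_i<0$, which is worse. This exhausts the cases.

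I expect the main obstacle to be the payment half of the necessity direction rather than monotonicity. The delicate points are establishing that a winner's payment is independent of its own bid, which must be settled before the threshold argument is even meaningful, and then squeezing $p^*$ to equal $\gamma_i$ from both sides while correctly handling the boundary behavior at $b_i=\gamma_i$ and any ties in the selection rule. The monotonicity step, by contrast, is a routine summation of two incentive inequalities.
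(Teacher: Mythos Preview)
The paper does not supply its own proof of this theorem; it is quoted as a known result from Myerson (1981) and then invoked as a black box in the truthfulness lemmas for SMART and ONLINE-SMART. So there is nothing in the paper to compare your argument against line by line.

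That said, your proposal is the standard single-parameter characterization argument and is correct. The monotonicity step via summing the two incentive constraints is the usual clean derivation; the bid-independence of a winner's payment from the pair of IC inequalities at two winning reports is right; and the squeeze argument pinning $p^{*}$ to the threshold by letting $c_i\uparrow\gamma_i$ (for $p^{*}\ge\gamma_i$) and considering a loser with $\gamma_i<c_i<p^{*}$ (for $p^{*}\le\gamma_i$) is exactly how this is normally closed off. Your sufficiency case analysis is complete. One cosmetic caution if you intend to splice this into the paper: the symbol $\gamma_i$ is already reserved there for the critical replacement value computed by the Next Best User routine, so you would want a different letter for the abstract Myerson threshold to avoid a clash.
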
 

\begin{lem}
SMART is truthful.
\end{lem}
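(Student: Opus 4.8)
The plan is to invoke Myerson's Theorem (Theorem \ref{Myerson_Theorem}) and verify its two conditions for SMART: that the selection rule is monotone in the bid, and that each winning user is paid a critical value. The guiding observation I would establish first is that user $i$'s bid $b_i$ enters the algorithm \emph{only} through comparisons of the form $v_i(\cdot) - b_i$ against bid-independent quantities, or through direct comparisons of $b_i$ with the thresholds $\gamma_i$ and $\beta_i$. Crucially, the marginal values $v_i(\cdot)$, the critical replacement value $\gamma_i = v_i(T\backslash\{i\}) - v_j(T\backslash\{i\}) + b_j$, and the entry threshold $\beta_i$ returned by User Entry Payment are all computed without reference to $b_i$. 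Hence, for a fixed profile of the other users, there is a well-defined threshold on $b_i$ below which user $i$ remains a winner, and this threshold is exactly the payment $p_i$; establishing this yields both Myerson conditions at once.

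For monotonicity I would walk through the three phases and show that each decision favors a smaller bid. In the screening phase, decreasing $b_i$ increases $v_i(S) - b_i$, so user $i$ can only be selected at least as early by the greedy $\arg\max$ rule, and the test $v_i(S) > b_i$ remains satisfied; I would argue that moving user $i$ earlier in the screening order does not eject it from the final screened set $S$. In the winner selection phase, Cond~1 is entered when $\sigma_i(T) = v_i(T\backslash\{i\}) - b_i > 0$, a test that only becomes easier as $b_i$ decreases; within Cond~1 the branch is decided by comparing $b_i$ against $\gamma_i$, so a smaller $b_i$ with $b_i \le \gamma_i$ keeps user $i$ in the retained branches (Cond~1.1 or Cond~1.3) rather than the replacement branch (Cond~1.2). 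Finally, the bad-user-removal test $u_i(T\backslash\{i\}) = v_i(T\backslash\{i\}) - p_i \le 0$ depends on $b_i$ only through $p_i$, which is a critical value and not the submitted bid, so lowering $b_i$ cannot trigger removal.

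For the critical-value condition I would match each payment rule to the threshold bid above which user $i$ loses. Under Cond~1.1 the payment is $p_i = \gamma_i$, and $\gamma_i$ is precisely the bid beyond which replacing user $i$ by user $j$ becomes profitable, so bidding above $\gamma_i$ sends user $i$ into the replacement branch and out of $T$. Under Cond~1.3 the payment is $\min(\sigma_i(T)+b_i,\beta_i) = \min(v_i(T\backslash\{i\}),\beta_i)$: bidding above $\beta_i$ prevents user $i$ from entering $S$ at all, while bidding above $v_i(T\backslash\{i\})$ drives $\sigma_i(T)$ negative and pushes user $i$ into Cond~2, where it is replaced or removed, so the minimum of the two is the critical value. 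For a user that becomes a winner by \emph{replacing} another (the user $j$ produced by Cond~1.2 or Cond~2), I would apply the same argument to the payment $p_j = \min(\gamma_j, v_j(T_{\text{new}}\backslash\{j\}))$ fixed inside the Replace User function.

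The main obstacle I anticipate is twofold. First, the replacement mechanism means a winning user need not have entered through screening, so establishing a clean critical value for replacement winners, and checking that their payment is genuinely the breakpoint of their own selection, requires careful bookkeeping of which set $T_{\text{new}}$ is current when the payment is fixed. Second, and more delicate, is the trajectory-stability issue in the monotonicity argument: lowering $b_i$ can change the order in which users are screened and the sequence of replacements, so I cannot argue purely locally that each individual test favors the lower bid. I would address this by using the fact that every competing quantity is independent of $b_i$, so that perturbing $b_i$ downward only promotes user $i$ against a fixed backdrop, and then arguing that this promotion preserves membership in the final set $T$. Verifying that no such reordering can paradoxically remove user $i$ is the technical heart of the proof.
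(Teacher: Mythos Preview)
Your proposal is correct and follows essentially the same approach as the paper: invoke Myerson's Theorem, establish monotonicity by case analysis on how user $i$ entered $T$ (via screening and retention, or via replacement), and establish the critical-payment property by showing that each payment rule $p_i \in \{\gamma_i,\ \beta_i,\ v_i(T\backslash\{i\})\}$ (or $\min(\gamma_j, v_j(T_{\text{new}}\backslash\{j\}))$ for a replacement winner) is precisely the threshold above which the corresponding selection test fails. The paper organizes the critical-value case split by whether the winner was in the screened set $S$ or not, while you organize it by the Cond branch taken, but these amount to the same case analysis; notably, the trajectory-stability concern you flag (that lowering $b_i$ may reorder the screening and replacement sequence) is not explicitly resolved in the paper's proof either, so your plan is at least as complete as the original.
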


\begin{proof}
\emph{ Monotonicity of SMART:}
Consider a user $i$ that is selected by SMART with bid $b_i$, i.e., $i \in T$.
Let user $i$ change its bid to $b_i'$, where $b_i' < b_i$. 
Let $\sigma_i'(T) = v_i(T\backslash \{i\}) - b_i'$. Now we look at two cases where the user $i$ with bid $b_i$ could have entered $T$.

a) Assume that user $i\in S$, in the user screening phase, with bid $b_i$, and was then retained in $T$. 
By the definition of the User Screening phase, if $i\in S$, then $i =\arg \max_{k\in U\backslash S} v_k(S) - b_k$ at some iteration $r_i$, where user $i$ entered $S$. 
If user $i$ instead bid $b_i' < b_i$, then again user $i$ enters $S$ at iteration $r_i$ or earlier. 
Since user $i$ is retained in $T$ with bid $b_i$, by definition of SMART, we have $\sigma_i(T) > 0$ and $\gamma_i > b_i$. 
Consequently, $\sigma_i'(T) > 0$ and $\gamma_i > b_i'$. 
Thus, even if user $i$ bid $b_i'$, both Cond. 1 and Cond. 1.1 in the Winner Selection phase are satisfied, and therefore user $i$ is retained in $T$.

b) If user $i$ entered into $T$ by replacing some user $j \in T$, then user $i$ has to satisfy  
$i = \arg\max_{k\in U\backslash T} v_k(T\backslash\{j\}) - b_k$ in the Next Best User function when called from Line 14 of the SMART.
Therefore, should user $i$ decrease its bid to $b_i'$ it would still replace user $j \in T$ and enter $T$. 

\emph{Existence of a Critical Bid Amount with SMART:} We claim that the payment $p_i$ made by SMART is critical, i.e., if any user $i$ bids in excess of its critical amount $p_i$, then SMART will not select it.
Two possible conditions exist with SMART, 

\begin{itemize}

\item A winning user $i \in S, i \in T$ receives $p_i= \min(v_i(T\backslash\{i\}),\beta_i,\gamma_i)$, where $S$ is the set of screened users. 
Lets assume that $p_i = v_i(T\backslash \{i\})$, and if user $i$ changes its bid to $b_i'$, $b_i' > v_i(T\backslash\{i\})$, then $\sigma_i'(T) = v_i(T\backslash\{i\}) - b_i' < 0$. 
From Cond 2 of SMART, we conclude that user $i$ would be replaced by another user $j$ if possible in $T$ or removed from $T$. 
If $p_i=\beta_i$ and $b_i' > \beta_i$, then by the definition of the User Entry Payment function, user $i$ will not enter the screening set $S$ in the User Screening phase.
Therefore, user $i$ will not enter $T$.
Finally, if $p_i = \gamma_i$ and $b_i' > \gamma_i$, then from Cond 1.2 user $i$ will be replaced by some user $j$ in $U\backslash T$.

\item A winning user $i\in T$ and $i \notin S$ receives $p_i= \min(v_i(T\backslash\{i\}),\gamma_i)$.
Such a user does not belong to $T$ at the beginning of the Winner Selection Phase and is a replacement user.
Note that, replacement users are paid inside the Replace User Function.
In the function a replacement user say $j \notin S$ replaces a user $i \in T$.
Let $T$ before the replacement be $T_{\text{old}}$, and after the replacement be $T_{\text{new}} = T\cup\{j\}\backslash\{i\}$. 
The Replace User function calls the Next Best User function to compute $\gamma_j$ for user $j$.
This function returns both the second best user to user $j$, user $k$ and the value of $\gamma_j$.
  
Lets assume that user $j$ is paid an amount equal to $\gamma_j$.
From the Next Best User Function $\gamma_j = v_j(T_{\text{new}}\backslash\{j\}) - v_k(T_{\text{new}}\backslash\{j\}) + b_k$. 
If user $j$ bids an amount $b_j' > \gamma_j$, then the following happens.
Since $T_{\text{new}}\backslash\{j\} = T_{\text{old}}\backslash\{i\}$ from line 1 of the Next Best User function we can conclude that user $k$ would take user $j$'s place as a replacement user at line 14 of SMART.    

Consider the alternate case when user $j$ is paid an amount equal to $v_j(T_{\text{new}}\backslash \{j\})$.
Note that user $j$ was selected as a replacement for user $i$, through the Next Best User function call made in line 14 of SMART for the computation of $\gamma_i$ for user $i$. At this point the set of winning users is $T_{\text{old}}$. 
If user $j$ bids $b_j' > v_j(T_{\text{new}}\backslash\{j\})$, then since $T_{\text{new}}\backslash\{j\} = T_{\text{old}}\backslash\{i\}$, $v_j(T_{\text{old}}\backslash\{i\}) = v_j(T_{\text{new}}\backslash\{j\})$, then 
from line 2 of the Next Best User function called at line 14 of SMART, since $b_j' > v_j(T_{\text{new}}\backslash\{j\})$, the Next Best User function either returns some other user $k$ as a replacement for user $i$ or if the user $j$ is the maximizer in line 1 of  the Next Best User function, then $j=-1$ is returned. In conclusion, user $j$ does not replace user $i$ if $b_j' > v_j(T_{\text{new}}\backslash\{j\})$.
\end{itemize}

\end{proof}

\subsection{Discussion} The SMART algorithm proposed in this section for maximizing the utility of the platform for offline crowd-sourcing problem is a simple to implement algorithm inspired by the VCG mechanism (that is known to be truthful but extremely hard to compute). With SMART, we first shortlist the potential winners using a greedy (linear time) algorithm that at each step finds the user with the best marginal utility. Then each user in the shortlist is kept or dropped or replaced depending on the effective utility that user brings to the platform. The payment strategy of SMART is similar to VCG mechanism, where each user is paid for the marginal utility it brings to the platform, and is equal to the increase in utility by replacing any user by the next best user plus the bid of the next best user, unless for some exceptional cases where it differs slightly. 

In comparison to the earlier algorithm M-Sensing \cite{M-Sensing2012}, SMART is more selective in picking the winning users and more frugal in payment made to each selected user. With M-Sensing, each winning user is paid an amount equal to the maximum amount that any user can bid and still be selected. We show that M-Sensing pays a little too much, since SMART is more profitable than M-Sensing while being truthful. 

\section{Online Scenario}

Unlike the offline scenario, where all the users arrive together, in the online scenario the users arrive sequentially. 
The challenge posed here is that the platform must respond to each incoming user immediately and irrevocably, though it may not have any prior knowledge of the bidding profiles of the users coming in the future.
We adapt the SMART algorithm for the online scenario and ensure that the resulting algorithm adheres to the four properties of an efficient mechanism.
In order to facilitate the analysis of the problem, we make the following assumptions in the online scenario.
\begin{itemize}
\item Without loss of generality, we assume that the every user in $U$, is such that the total value of tasks it provides is greater than its bid 
i.e., $v_i>b_i$.\footnote{If an incoming user has $v_i<b_i$, then it is rejected immediately and does not count as a user in $U$.}  
\item The user does not know its time of arrival with reference to the time of arrival of the other users.
\end{itemize}
We define $U[p:q]$ as the set representing the successive users from time $p$ to $q$.

\begin{table}
\begin{tabular}{r l}
\hline
& \textbf{ONLINE-SMART}\\
\hline
1 	&	\textbf{// Initialization}\\
2 	&	$T \leftarrow \emptyset$, $R \leftarrow \emptyset$\\ 
3	&	$P \leftrightarrow \{p_1,p_2,\ldots p_n\}$ \\
4	&	$n = |U|$, $k = \lfloor n/c \rfloor$\\
5 	&	\textbf{// Phase 1 - Observation}\\
6       &       $R \leftarrow \text{SMART}(U[1:k])$ //R is the set of winning users of SMART\\
7      	&       $p_1,p_2,\ldots p_n \leftarrow 0$ \\
8	&       \textbf{// Phase 2 - Winner Selection}\\
9	&	\textbf{for} each $i = k+1,k+2,\ldots,n $ and $|T| \leq m$ do\\
10	&	\quad \textbf{if} $|R| < m$ then\\
11	&	\quad \quad \textbf{if} $v_i(R) - b_i > 0$ then\\
12	&	\quad \quad \quad $(R,T,P) \leftarrow$ Add User $(R,i)$\\ 
13	&	\quad \quad \textbf{else}\\
14	&	\quad \quad \quad $(R,T,P) \leftarrow$ Try To Replace $(R,i)$\\
15	&	\quad \quad \textbf{end if}\\
16	&	\quad \textbf{else if} $|R| = m$\\
17	&	\quad \quad $(R,T,P) \leftarrow$ Try To Replace $(R,i)$\\
18	&	\quad \textbf{end if}\\
19	&	\quad $R \leftarrow$ Remove Bad Users $(R,T)$\\ 
20	&	\textbf{endfor}\\
21	&	Return $(T,P)$\\
\hline
\end{tabular}
\label{online_smart} 
\end{table}

\begin{table}
\begin{tabular}{r l}
\hline
& \textbf{Add User $(R,i)$}\\
\hline
1	&	$p_i \leftarrow v_i(R)$\\
2	&	$R \leftarrow R \cup \{i\}$, $T \leftarrow T \cup \{i\}$\\
3	&	Return $(R,T,P)$\\
\hline
\end{tabular} 
\end{table}

\begin{table}
\begin{tabular}{r l}
\hline
& \textbf{Try To Replace $(R,i)$}\\
\hline
1	&	$j = \arg\max_{k\in R\backslash T} \Big(v(R\backslash \{k\}) \cup \{i\}) - v(R) + b_k - b_i \Big)$\\
2	&	\textbf{if} $v((R\backslash \{j\}) \cup \{i\}) - v(R) + b_j > b_i$ then\\
3	&	\quad $p_i \leftarrow v((R\backslash \{j\}) \cup \{i\}) - v(R) + b_j$\\
4	&	\quad $R \leftarrow (R \backslash \{j\}) \cup \{i\}$ and $T \leftarrow T \cup \{i\}$\\
5	&	\textbf{end if}\\
6	&	Return $(R,T,P)$\\
\hline
\end{tabular} 
\end{table}

\begin{table}
\begin{tabular}{r l}
\hline
& \textbf{Remove Bad Reference Users $(R,T)$}\\
\hline
1       &	\textbf{for} each $j \in R\backslash T$ do\\
2	&	\quad \textbf{if} $v(R) < v(R\backslash \{j\}) + b_j$ then\\
3	& 	\quad \quad $R \leftarrow R \backslash \{j\}$\\
4	& 	\quad \textbf{end if}\\
5	& 	\textbf{endfor}\\
6	&	Return $R$\\
\hline
\end{tabular} 
\end{table}

%This section discusses the proposed algorithm
\subsection{ONLINE-SMART Algorithm}

The ONLINE-SMART algorithm, motivated by the $k$-secretary problem, consists of two phases, the \textit{observation phase} and the \textit{winner selection phase}.
In the observation phase, it rejects the first $k = \lfloor \frac{n}{c} \rfloor$ (c is a constant chosen by the platform) users on their arrival.
It then runs SMART on the bidding profiles of $U[1:k]$ and stores the output set of winning users as a reference set $R$. 
Since the number of tasks is $m$, the cardinality of $R$ is less than or equal to $m$. 
The algorithm uses this set $R$ as a reference for selection of users among the remaining $n-k$ users, $U[n-k:n]$.
In the beginning of the selection phase, the final winners set $T$ is set to $\emptyset$ and users are processed as they arrive. 

On the arrival of a user $i\in U[k+1:n]$, if the cardinality of $R$ is less than $m$ then the algorithm does the following.
If $v_i(R) - b_i > 0$, that is the difference in marginal value and bid with respect to $R$ is positive, then the algorithm adds user $i$ to both $R$ and $T$ and makes a payment $p_{i}=v_{i}(R)$.
If $v_i(R) - b_i \leq 0$ then the algorithm calls the Try To Replace function.
The Try To Replace function determines if it is profitable to replace a user $j\in R$ with user $i$ (if $v((R\backslash \{j\}) \cup \{i\}) - v(R) + b_j > b_i$). 
If so the algorithm replaces user $j$ with user $i$.
It makes a payment to user $i$, $p_{i}=v_i((R\backslash \{j\}) \cup \{i\}) - v(R) + b_{j}$. 
If no user $j\in R$ can be replaced profitably, user $i$ is rejected and the algorithm moves on to the next user.

If the cardinality of $R$ is equal to $m$ then there can be no further addition of users to $R$ without decreasing the marginal utility of $R$.
This is because the presence of $m$ users implies the completion of at least $m$ tasks which leaves no more new tasks to be done.
Therefore, the algorithm calls the Try To Replace function to find if any user $j\in R$ can be replaced by the current user, whose execution has been explained before.
If no user can be replaced profitably, user $i$ is rejected and the algorithm moves on to the next user.

Between the arrivals of two users in the selection phase the algorithm iterates through $R\backslash T$ and removes any user $i$ having $v_i(R\backslash\{i\}) - b_i \leq 0$.
This ensures that all users in $R$ have a positive difference in marginal value and bid, and consequently form a "good" reference for the incoming users.
The algorithm terminates when either all users in $U$ have arrived or when $|T|$ becomes equal to $m$. 

%The properties of ONLINE-SMART are discussed here    
\subsection{Properties of the Mechanism}

\begin{lem} 
ONLINE-SMART is computationally efficient.
\end{lem}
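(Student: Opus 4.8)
The plan is to establish computational efficiency by decomposing \textbf{ONLINE-SMART} into its constituent parts and bounding the running time of each, exactly mirroring the structure used in the proof of Lemma~\ref{computational_efficiency_smart} for the offline case. First I would handle the observation phase: it invokes \textbf{SMART} once on the restricted user set $U[1:k]$. Since $k=\lfloor n/c\rfloor \le n$, and we have already shown (Lemma~\ref{computational_efficiency_smart}) that \textbf{SMART} runs in $O(nm^3)$ time on $n$ users, this single call costs at most $O(nm^3)$. This disposes of Phase~1 immediately by appeal to the earlier result.

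Next I would turn to the winner-selection loop in Phase~2, which iterates over the at most $n-k \le n$ remaining users. For each incoming user the algorithm performs a constant number of subroutine calls, so it suffices to bound each subroutine. The \textbf{Add User} function does a single marginal-value evaluation and two set insertions, which I would charge at $O(m)$ (a marginal-value computation touches at most $m$ tasks). The \textbf{Try To Replace} function computes an $\arg\max$ over $j\in R\backslash T$, and since $|R|\le m$ the search ranges over at most $m$ candidates, each requiring a marginal-value evaluation costing $O(m)$; hence \textbf{Try To Replace} is $O(m^2)$. The \textbf{Remove Bad Reference Users} routine loops over $R\backslash T$ (at most $m$ users), each iteration doing an $O(m)$ value comparison, so it is also $O(m^2)$. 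Combining these, each of the $O(n)$ iterations of the Phase~2 loop costs $O(m^2)$, giving $O(nm^2)$ for the whole selection phase.

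Putting the two phases together, the total complexity is dominated by the single \textbf{SMART} call, yielding an overall bound of $O(nm^3)$, which is polynomial in the input size. I would conclude that \textbf{ONLINE-SMART} is computationally efficient. I do not anticipate a genuine obstacle here, since the argument is a straightforward aggregation of per-subroutine bounds; the only point requiring mild care is ensuring that every marginal-value evaluation $v_i(R)$ is correctly charged $O(m)$ (bounded by the number of tasks rather than the number of users) and that the cardinality bound $|R|\le m$ is invoked to keep the replacement search cheap. A secondary subtlety worth stating explicitly is that although the algorithm is online and responds to each user causally, the per-user work is still polynomially bounded, so causality imposes no additional cost beyond the one-time offline \textbf{SMART} invocation in the observation phase.
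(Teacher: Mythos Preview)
Your proposal is correct and follows essentially the same approach as the paper: bound the observation phase by the $O(nm^3)$ cost of \textbf{SMART} via Lemma~\ref{computational_efficiency_smart}, bound each of the at most $n$ selection-phase iterations by $O(m^2)$ through the \textbf{Try To Replace} and \textbf{Remove Bad Reference Users} subroutines, and conclude $O(nm^3)$ overall. Your write-up is in fact slightly more explicit than the paper's, spelling out the $|R|\le m$ justification and the $O(m)$ cost of each marginal-value evaluation that the paper leaves implicit.
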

\begin{proof} In the observation phase, the platform runs the SMART algorithm on the set of the first $k$ users.
Therefore, from Lemma \ref{computational_efficiency_smart} the complexity associated with observation phase is $O(nm^3)$ (since  $k = \frac{n}{c}$). 
In the selection phase, the main routine runs $n-k$ times.
Within the main routine there are two functions, one each for the Try To Replace and the Remove Bad Reference Users.
Both of these functions have a complexity of $O(m^2)$.
Consequently, the selection phase takes $O(nm^2)$ time. 
Therefore the computational complexity associated with ONLINE-SMART is $O(nm^3)$. 
Hence, ONLINE-SMART is computationally efficient.\end{proof}

\begin{lem} 
ONLINE-SMART is individually rational.
\end{lem}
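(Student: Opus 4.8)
The plan is to reuse the reduction from the proof of Lemma~\ref{lem:offrational}. Because every participating user is rational it bids at least its cost, $b_i \ge c_i$, so its personal utility $\omega_i = p_i - c_i$ is non-negative as soon as $p_i \ge b_i$. Hence it suffices to show that each winning user $i \in T$ receives a payment at least equal to its bid.

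The key structural observation is that in ONLINE-SMART a user can be inserted into the winner set $T$ at only two places: inside the Add User function (invoked at Line 12) and inside the Try To Replace function (invoked at Lines 14 and 17). I would check the payment bound separately in each of these two branches. I would also note that the Remove Bad Reference Users routine operates only on $R \backslash T$ and therefore never deletes a winner nor rewrites the payment of a user already in $T$; consequently a payment, once assigned upon admission to $T$, is never reduced, and it is enough to verify $p_i \ge b_i$ at the moment of insertion.

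For the Add User branch, the guard on Line 11 guarantees the function is entered only when $v_i(R) - b_i > 0$, while Add User sets $p_i \leftarrow v_i(R)$; hence $p_i = v_i(R) > b_i$. For the Try To Replace branch, user $i$ is placed in $T$ only within the \textbf{if} block whose condition is $v((R\backslash\{j\}) \cup \{i\}) - v(R) + b_j > b_i$, and in that block the payment is set to exactly the left-hand side of this inequality, $p_i \leftarrow v((R\backslash\{j\}) \cup \{i\}) - v(R) + b_j$, so again $p_i > b_i$. Combining the two cases, every $i \in T$ satisfies $p_i > b_i \ge c_i$, whence $\omega_i > 0$ and the mechanism is individually rational.

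This argument is essentially a direct inspection of the two payment rules, so I do not anticipate a substantive analytical obstacle, in contrast to the offline Cond~1.2 analysis where replacement payments had to be bounded against the bids of second-best users. The only points that genuinely require care are the two bookkeeping facts flagged above: that Add User and Try To Replace are the sole gateways into $T$, and that no later step (in particular the bad-reference cleanup) can lower a winner's already-fixed payment below its bid.
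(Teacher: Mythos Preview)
Your proof is correct and follows essentially the same approach as the paper: reduce to showing $p_i \ge b_i$ for every $i \in T$, then case-split on the two entry points into $T$ (Add User and Try To Replace) and read off the inequality directly from the guard condition and payment rule in each branch. Your additional remark that Remove Bad Reference Users touches only $R\backslash T$ and hence never revises a winner's payment is a nice bit of bookkeeping the paper leaves implicit, but otherwise the arguments coincide.
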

\begin{proof} To prove that ONLINE-SMART is individually rational, it is sufficient to prove that $\forall_{i\in T}$ $p_i\geq b_i$. 
In the selection phase of ONLINE-SMART, a user enters the final winner's set $T$ through either the Add User function or the Try To Replace function.
The Add User function is called when the incoming user $i$ has $v_i(R) > b_i$.
Since the Add User function pays user $i$, $p_i = v_i(R)$, $p_i > b_i$.
An incoming user $i$ replaces an existing user $j\in R\backslash T$ through the Try To Replace function if and only if $v((R\backslash \{j\}) \cup \{i\}) - v(R) + b_j - b_i > 0$.
Since the payment made in the Try To Replace function to such a selected user $i$ is $p_i = v((R\backslash \{j\}) \cup \{i\}) - v(R) + b_j$, therefore $p_i > b_i$.\end{proof}

\begin{lem} 
ONLINE-SMART is profitable.
\end{lem}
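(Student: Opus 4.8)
The goal is to show that the final platform utility $u(T) = v(T) - \sum_{i \in T} p_i \geq 0$ at termination of ONLINE-SMART. The plan is to prove the stronger invariant that the marginal utility contributed by each winning user is non-negative at the moment of its selection, and that this per-user non-negativity is preserved through all subsequent operations (additions, replacements, and bad-user removals). Concretely, I would track the quantity $v(R) - \sum_{i \in T} p_i$ across the selection phase and argue it stays non-negative, then transfer the conclusion to $v(T)$ using the relationship between $R$ and $T$.

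First I would establish the base case: at the end of the observation phase, $R = \mathrm{SMART}(U[1:k])$ and $T = \emptyset$, so the tracked quantity is $v(R) \geq 0$ trivially (values are non-negative). Next I would examine each way a user enters $T$ during the selection phase. When user $i$ enters via \textbf{Add User}, we have $v_i(R) - b_i > 0$ and the payment is $p_i = v_i(R)$; the contribution of this user to $v(R)$ is exactly its marginal value $v_i(R)$, which equals $p_i$, so the net change in $v(R) - \sum_{l \in T} p_l$ is zero. When user $i$ enters via \textbf{Try To Replace}, replacing some $j \in R \setminus T$, the guard $v((R\backslash\{j\})\cup\{i\}) - v(R) + b_j > b_i$ holds and the payment is $p_i = v((R\backslash\{j\})\cup\{i\}) - v(R) + b_j$. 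Here the change in $v(R)$ is $v((R\backslash\{j\})\cup\{i\}) - v(R)$, and subtracting $p_i$ leaves $-b_j \leq 0$; so $v(R) - \sum_l p_l$ can only decrease by $b_j$. The key point to make rigorous is that the replaced user $j$ was a \emph{reference} user (never paid, so $j \in R\setminus T$), and that any such decrease is compensated by the slack already present in $R$.

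The main obstacle will be the \textbf{Remove Bad Reference Users} step and its interaction with the above accounting. When a reference user $j \in R\setminus T$ is removed because $v(R) < v(R\backslash\{j\}) + b_j$, the value $v(R)$ decreases, which at first glance threatens the invariant $v(R) - \sum_l p_l \geq 0$. I would handle this by strengthening the invariant to the final target directly: rather than tracking $v(R)$, I would prove that $v(T) - \sum_{i\in T} p_i \geq 0$ is maintained, using that every paid user's payment never exceeds its marginal value \emph{into the winning set} $T$. Since $T \subseteq R$ and the removed users lie in $R\setminus T$, removals do not alter $T$ nor any payment, so $v(T)$ and $\sum_{i\in T} p_i$ are untouched by that step; the removals only affect the reference set used for future decisions. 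Thus the clean argument is to show $p_i \leq v_i(T\setminus\{i\})$ for each $i\in T$ evaluated in the final $T$, i.e., each user's payment is at most its final marginal value, from which profitability follows by summing marginal values telescopically.

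The step I expect to require the most care is verifying $p_i \leq v_i(T\setminus\{i\})$ for replacement users: the payment $p_i = v((R\backslash\{j\})\cup\{i\}) - v(R) + b_j$ is computed against the \emph{reference} set $R$ at selection time, not against the final $T$, so I would need submodularity (diminishing marginal returns) of $v$ — namely that marginal value of $i$ with respect to a smaller set $T\setminus\{i\}$ is at least its marginal value with respect to the larger set $R$ — together with $b_j \geq 0$, to bound $p_i$ by $v_i(T\setminus\{i\})$. If $v$ is only assumed to be an arbitrary combinatorial value function without submodularity, this is precisely where the argument is most delicate, and I would either invoke the monotonicity/submodularity already implicit in the screening construction or restrict to that structural assumption.
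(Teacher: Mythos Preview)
Your overall strategy---show that each winner's payment is bounded by its marginal value into the current winning set, so that utility increases monotonically as $T$ grows---is exactly the paper's approach. But two points in your plan are off, one minor and one substantive.

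The minor point: you should work with the set $T$ \emph{at the moment of insertion}, not the final $T$. The telescoping identity is $v(T)=\sum_i v_i(T_{<i})$ where $T_{<i}$ is the winning set just before $i$ is added; the quantities $v_i(T\setminus\{i\})$ for the final $T$ do not sum to $v(T)$ in general. The paper simply shows $u(T\cup\{i\})-u(T)=v_i(T)-p_i\ge 0$ at each insertion, with $T$ the current winning set, and inducts.

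The substantive gap is in the Try To Replace case. You write that submodularity ``together with $b_j\ge 0$'' suffices to bound $p_i$. It does not: writing $p_i=v_i(R\setminus\{j\})-\bigl(v(R)-v(R\setminus\{j\})\bigr)+b_j$, submodularity (via $T\subseteq R\setminus\{j\}$, since $j\in R\setminus T$) gives $v_i(R\setminus\{j\})\le v_i(T)$, so $p_i\le v_i(T)+b_j-v_j(R\setminus\{j\})$. To conclude $p_i\le v_i(T)$ you need $b_j\le v_j(R\setminus\{j\})$, not merely $b_j\ge 0$. This inequality is precisely what the \emph{Remove Bad Reference Users} step guarantees: after that cleanup, every $j\in R$ satisfies $v(R)-v(R\setminus\{j\})\ge b_j$. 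In other words, the step you flagged as an obstacle is in fact the key enabler for profitability in the replacement branch; the paper invokes it explicitly at this point. Once you use it, the argument closes exactly as you sketched.
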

\begin{proof} 
An incoming user $i\in U[k+1:n]$ enters $T$ through either the Add User function or the Try To Replace function.

Let user $i$ enter $T$ through the Add User function. 
At any point in the algorithm the set of winning users ($T$) is always a subset of the reference set ($R$), that is $T\subseteq R$.
This implies that for a user $i\in U\backslash R$, the marginal tasks with respect to $R$ are always a subset of the marginal tasks with respect to $T$, i.e. $\tau_i(R) \subseteq \tau_i(T)$.
Therefore, $\chi(\tau_i(R)) \leq \chi(\tau_i(T))$.
The payment made to user $i$ is $p_i = v_i(R) = \chi(\tau_i(R))$.
Hence, $v_i(T) - p_i = \chi(\tau_i(T)) - \chi(\tau_i(R)) \geq 0$. 
The platform utility of $T\cup\{i\}$ is $u(T\cup\{i\}) = v(T) + v_i(T) - p_i - \sum_{j\in T} p_i = u(T) + v_i(T) - p_i$.
Therefore $u_i(T) = u(T\cup\{i\}) - u(T) \geq 0$, i.e. the incremental utility change is always non-negative. 

Let user $i$ enter $T$ through the Try To Replace function.
The payment made to user $i$ is 
\begin{eqnarray}
\nonumber p_i &=& v((R\backslash \{j\}) \cup \{i\}) - v(R) + b_{j},\\ 
\nonumber     &=& v((R\backslash \{j\}) \cup \{i\}) - v(R\backslash\{j\}) + v(R\backslash\{j\}) - v(R) + b_j,\\
\nonumber     &=& \chi(\tau_i(R\backslash\{j\})) + v(R\backslash\{j\}) - v(R) + b_j.
\end{eqnarray} 
The increase in the utility of the platform is $u(T\cup\{i\}) - u(T) = u_i(T)$, where $u_i(T)$
\begin{eqnarray}
\nonumber  &=& v_i(T) - p_i,\\
\nonumber        &=& \chi(\tau_i(T)) - \chi(\tau_i(R\backslash\{j\})) + v(R) - v(R\backslash\{j\}) - b_j.
\end{eqnarray}
Since user $j\in R\backslash T$, $T\subseteq R\backslash\{j\}$.
Therefore, $\chi(\tau_i(T)) - \chi(\tau_i(R\backslash\{j\})) \geq 0$.
Further the Remove Bad Reference Users function ensures that for any $j\in R$, $v(R) - v(R\backslash\{j\}) - b_j \geq 0$.
Therefore $u_i(T) \geq 0$.

Since at the start of the Selection phase $u(T) = 0$ and the addition of a user $i$ to $T$ through the Add User function or the Replace User function yields $u_i(T) \geq 0$, ONLINE-SMART is profitable. \end{proof}

\begin{lem} 
ONLINE-SMART is truthful.
\end{lem}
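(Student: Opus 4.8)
The plan is to invoke Myerson's Theorem (Theorem~\ref{Myerson_Theorem}) and reduce truthfulness to two facts: that ONLINE-SMART's selection rule is monotone in each bid, and that every winner pays its critical bid. The structural observation underpinning both is that the winner set $T$ is append-only --- the Remove Bad Reference Users routine only ever deletes elements of $R\backslash T$, so an admitted user is never revoked --- and that a user $i$ is examined only at its own arrival, so its bid $b_i$ cannot influence any decision taken before it. Hence the reference set $R$ that $i$ faces depends on the earlier arrivals alone and is independent of $b_i$, which lets me treat $R$ as fixed while varying $b_i$.

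For monotonicity I would fix a winner $i$ and lower its bid to $b_i' < b_i$. With $R$ (and thus $|R|$) unchanged, the two admission tests take a convenient form: the Add User test is $v_i(R) - b_i > 0$, and the Try To Replace test is $v((R\backslash\{j\})\cup\{i\}) - v(R) + b_j > b_i$ with $j = \arg\max_{k\in R\backslash T}\bigl(v((R\backslash\{k\})\cup\{i\}) - v(R) + b_k - b_i\bigr)$. In both tests the left-hand side is independent of $b_i$, and in the argmax the term $-b_i$ is common to every candidate, so the chosen $j$ does not depend on $b_i$ either. Lowering the bid therefore only relaxes whichever test applies; the sole branch change it can induce is to move $i$ from Try To Replace into Add User, which still admits it. Combined with the append-only property of $T$, this yields monotonicity.

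For the payment I would argue branch by branch that $p_i$ is the largest bid at which $i$ is admitted through that branch. A user admitted by Add User pays $p_i = v_i(R)$, exactly the value of $b_i$ at which $v_i(R) - b_i > 0$ fails; a user admitted by Try To Replace pays $p_i = v((R\backslash\{j\})\cup\{i\}) - v(R) + b_j$, exactly the threshold of its own admission test. In each branch, bidding above $p_i$ falsifies the governing inequality, so $i$ is not admitted through it, paralleling the offline argument.

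The main obstacle is the interaction between the two admission branches, which has no analogue in a single-rule auction. Because the branch a user enters is selected by the sign of $v_i(R) - b_i$, a user that truthfully lands in Add User could, by overbidding past $v_i(R)$, be rerouted into Try To Replace; since the replacement payment is independent of $b_i$, the critical-payment claim forces me to show that this reroute can never let $i$ win at a strictly larger payment --- precisely, that the replacement gain $v_i(R\backslash\{j\}) - v_j(R\backslash\{j\}) + b_j$ never exceeds the Add User payment $v_i(R)$ whenever Add User is available. The leverage is the Remove Bad Reference Users invariant $v(R) - v(R\backslash\{j\}) \ge b_j$ for every $j\in R\backslash T$, which caps the replacement gain, combined with the diminishing-marginal-value (submodularity) property of $v(S) = \chi(\tau(S))$; reconciling these into the bound $\text{replacement gain} \le v_i(R)$ is the delicate step I would scrutinize most carefully, since it is here --- and not in the per-branch thresholds --- that truthfulness across the whole bid range actually stands or falls. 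I would also confirm that online irrevocability opens no hidden deviation: once $i$ is rejected at a high bid it cannot re-enter at a later arrival, so no intermediate winning opportunity is missed.
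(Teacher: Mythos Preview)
Your skeleton---Myerson's theorem, branch-by-branch monotonicity, branch-by-branch critical payments---is exactly the paper's argument, and the observation that the reference set $R$ seen by user $i$ is fixed independently of $b_i$ is a welcome bit of rigor that the paper leaves implicit. You also go further than the paper in one respect: you flag the cross-branch interaction, namely that a user who would truthfully enter via Add User can, by overbidding past $v_i(R)$, be rerouted into Try To Replace. The paper's own proof simply asserts that once $v_i(R)-b_i'<0$ such a user ``no longer enters $T$,'' without ever checking the alternate path; so the gap you isolate is real, and the paper does not close it either.

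The difficulty is that the fix you outline---the Remove Bad Reference Users invariant together with submodularity of $v$---does not deliver the needed bound. Writing the replacement payment as $v_i(R\backslash\{j\}) - v_j(R\backslash\{j\}) + b_j$, the inequality you must prove is $v_i(R\backslash\{j\}) - v_i(R) \le v_j(R\backslash\{j\}) - b_j$; the invariant only tells you the right side is nonnegative, submodularity only tells you the left side is nonnegative, and neither controls one by the other. Concretely, take linear $\chi$ with tasks $t_1,t_2$ each worth $10$; let $R\backslash T=\{j\}$ with $\tau_j=\{t_1\}$ and $b_j=5$, with $|R|<m$ and no one else in $R$ touching $t_1$ or $t_2$; let the arriving user $i$ have $\tau_i=\{t_1,t_2\}$ and true cost $3$. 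Then $v_i(R)=10$, while the replacement payment is $v((R\backslash\{j\})\cup\{i\})-v(R)+b_j=20-10+5=15$. Bidding truthfully, $i$ triggers Add User and is paid $10$; bidding $12$ reroutes $i$ to Try To Replace, where it wins and is paid $15$. So the step you correctly single out as ``the delicate step'' does not survive the invariant-plus-submodularity attack, and without an additional structural assumption or a modification to the payment rule the critical-payment claim---in the form both you and the paper use---does not go through.
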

\begin{proof} From Theorem \ref{Myerson_Theorem}, ONLINE-SMART is truthful if its selection rule is monotone and the payment it makes is critical. 
Let us assume that user $i$ enters the final winners set $T$ with bid $b_i$. 

Let user $i$ change its bid to $b_i'$ with $b_i' < b_i$.
If user $i$ entered $T$ through the Add User function, then $v_i(R) - b_i > 0$.
Since $v_i(R) - b_i' > 0$ user $i$ enters $T$.  
If user $i$ entered $T$ through the Try To Replace Function then for some $j\in R\backslash T$, $v((R\backslash \{j\}) \cup \{i\}) - v(R) + b_j - b_i > 0$.
Further user $i$ replaces user $j$ in $R$.
Since $v((R\backslash \{j\}) \cup \{i\}) - v(R) + b_j - b_i > 0$, user $i$ still replaces user $j$ in $R$ and enters $T$.
Hence the selection rule of ONLINE-SMART is monotone.

Let user $i$ change its bid to $b_i'$ with $b_i' > p_i$.
If user $i$ entered $T$ through the Add User function then $p_i = v_i(R)$.
Since $b_i' > v_i(R)$, $v_i(R) - b_i' < 0$ and user $i$ no longer enters $T$.  
If user $i$ entered $T$ through the Try To Replace Function, then it replaces some $j\in R\backslash T$ and is paid $p_i = v((R\backslash \{j\}) \cup \{i\}) - v(R) + b_j$.
Since $v((R\backslash \{j\}) \cup \{i\}) - v(R) + b_j - b_i' < 0$, user $i$ no longer replaces user $j$ in $R$ and consequently it does not enter $T$.
Hence the payment made by ONLINE-SMART is critical. \end{proof}

\subsection{Discussion}
ONLINE-SMART is a $k$-secretary algorithm equivalent of the SMART algorithm. It works by initially rejecting a few users, whose profiles are used by the offline SMART algorithm to generate a reference 
set that is used to select/reject future users and to decide their respective payments.  The reference set allows the platform to select users with positive marginal utility. Also, since in the online scenario we assumed that each user does not know when it arrives in relation to other users, the payment strategy with ONLINE-SMART that ensures truthfulness is simpler than the SMART.

In general, the "goodness" of any online algorithm is measured by its competitive ratio, i.e. the ratio of the utility of the online algorithm with the utility of the offline algorithm.
For the $k$-secretary problem, the competitive ratio has been shown to be around $1-1/e$, where the first $1/e$ users are rejected, assuming that users arrive uniformly randomly. 
Finding the competitive ratio of the ONLINE-SMART is, however, very challenging, since one can compare the users that are selected with ONLINE-SMART and SMART, but not their payments since there is no direct relation between them. The latter fact does not allow any tractable analytical solution for finding the 
competitive ratio of the ONLINE-SMART, and to understand its behavior with respect to the SMART, we turn to extensive numerical simulations. Not surprisingly, it turns out it is optimal to approximately reject the first $1/3$ users (similar to $k$-secretary problem), to get the best competitive ratio. 

\section{Simulation}

\begin{figure}
\centering
\includegraphics[scale=0.35]{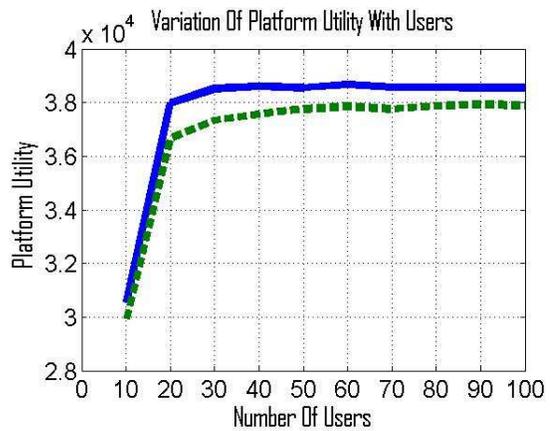}
\caption{Variation of Utility With Users}
\label{fig:UtilityVsUsers}
\end{figure}
In this section, we evaluate the performance of SMART and ONLINE-SMART using numerical simulations and compare it with M-Sensing \cite{M-Sensing2012}. Fig. \ref{fig:UtilityVsUsers} plots platform utility with respect to the number of users.
For Fig. \ref{fig:UtilityVsUsers}, the number of tasks were kept constant while varying the number of users. We used $m=1000$ tasks with values of tasks generated uniformly randomly between $30$ and $50$. 
Each user was assumed to bid for $25$\% of the tasks randomly and each bid was generated uniformly randomly between $5$ and $50$. Fig. \ref{fig:UtilityVsUsers} clearly shows that SMART outperforms M-Sensing by roughly $15$\%.

\begin{figure}
\centering
\includegraphics[scale=0.3]{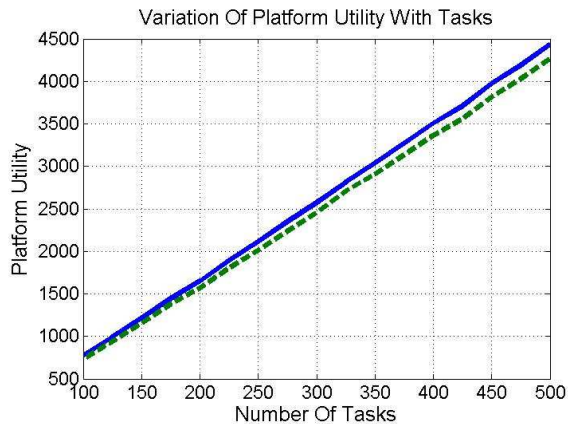}
\caption{Variation of Utility With Tasks}
\label{fig:UtilityVsTasks}
\end{figure}

In Fig. \ref{fig:UtilityVsTasks}, we plot the  platform utility with respect to the number of tasks $m$.
We keep  the number of users fixed to $n=100$. Again, each user was assumed to bid for $25$\% of the tasks randomly and each bid was generated uniformly randomly between $5$ and $50$.
Tasks were generated with values ranging uniformly between $0$ and $20$.

Once again Fig. \ref{fig:UtilityVsTasks} shows that SMART has larger utility compared to M-Sensing, however, the margin is smaller compared to Fig. \ref{fig:UtilityVsUsers}. Further, it is interesting to note that the performance of SMART improves compared to M-Sensing as number of tasks $m$ increases.

\begin{figure}
\centering
\includegraphics[scale=0.25]{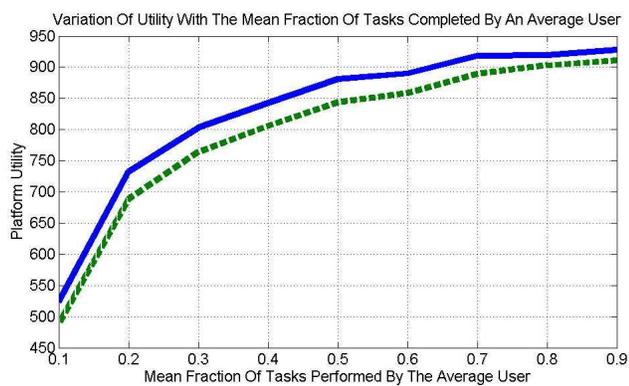}
\caption{Variation of Utility With Mean Task Completion}
\label{fig:UtilityVsMeanTaskCompletion}
\end{figure}

Fig. \ref{fig:UtilityVsMeanTaskCompletion} plots the platform Utility with respect to the mean fraction of tasks that a user completes. The number of users and the number of tasks were kept constant, while the fraction of total tasks that any user completes on average were varied. We used $n=50$ users and $m=100$ tasks, where each user bids uniformly randomly between $5$ and $50$, and each task had values uniformly randomly between $0$ and $20$. From Fig. \ref{fig:UtilityVsMeanTaskCompletion}, we can see that SMART performs better than M-Sensing for all values of fractions of mean task completion.
Further, SMART provides a healthy improvement in Platform Utility over M-Sensing when each user bids to complete about $20$\% to $70$\% of the tasks.  

\begin{figure}
\centering
\includegraphics[scale=0.25]{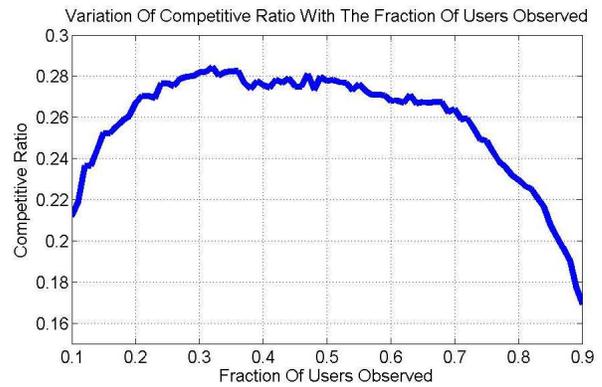}
\caption{Variation of the Competitive Ratio with the Fraction of Users Observed}
\label{fig:CompetitiveRatio}
\end{figure}

Next, we move on to to quantify the performance of the ONLINE-SMART in comparison to SMART to understand the competitive ratio.
In Fig. \ref{fig:CompetitiveRatio}, we plot the competitive ratio of the ONLINE-SMART as a function of $k$. 
We used 
$n=100$ users, and each user was assumed to bid for $25$\% of the tasks randomly and each bid was generated uniformly randomly between $5$ and $50$. The number of tasks $m=30$ with values uniformly random between $0$ and $40$.
Fig. \ref{fig:CompetitiveRatio} indicates that observing $32$\% (around one-third) of the total number of 
users and using their profiles to form a reference set maximizes the competitive ratio of ONLINE-SMART. 

\begin{figure}
\centering
\includegraphics[scale=0.2]{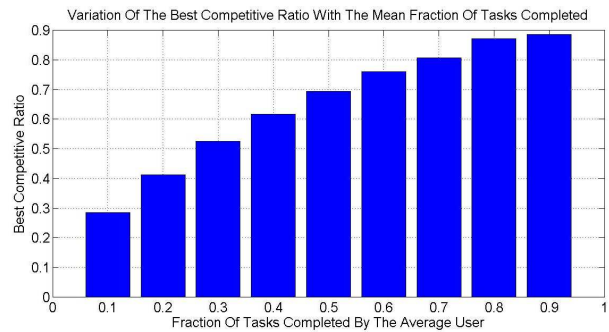}
\caption{Variation of the Competitive Ratio with the Fraction of Total Tasks Completed by an Average User}
\label{fig:CompetitiveRatioMeanFractionTasks}
\end{figure}

Competitive ratio of ONLINE-SMART depends on the fraction of total tasks that each user completes on average.
Larger the fraction, larger is the overlap in the tasks 
completed by different users. Consequently, users arriving in observation phase have a lot of common tasks with the users arriving in selection phase, thereby allowing ONLINE-SMART to make "good" user selections. Thus, it is reasonable to expect that the competitive ratio would increase as the average number of tasks that each user completes increases.
This notion is confirmed through simulation in Fig. \ref{fig:CompetitiveRatioMeanFractionTasks}, where 
the competitive ratio was plotted as a function of the total tasks completed by any user between $10$\% to $90$\%.
At each step, the best competitive ratio was plotted using the best value of $k$ for that particular fraction of total tasks completed by each user.

\begin{figure}
\centering
\includegraphics[scale=0.21]{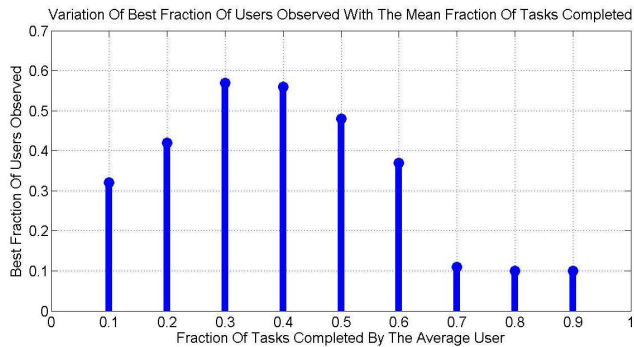}
\caption{Variation of the Fraction Of Users in the Observation Phase with the Fraction of Total Tasks Completed by an Average User}
\label{fig:BestFractionMeanFractionTasks}
\end{figure}

One interesting question remains: given that we know that each user completes a certain fraction of tasks on average, what fraction of users should ONLINE-SMART observe in order to produce the maximum competitive ratio with respect to SMART? This is answered in Fig. \ref{fig:BestFractionMeanFractionTasks}, which is a plot of the fraction of users observed by ONLINE-SMART while producing the competitive ratios depicted in Fig. \ref{fig:CompetitiveRatioMeanFractionTasks}.

We can observe that the fraction of users that need to be observed by ONLINE-SMART has its maximum  when 
each user completes $30$\% of the total tasks on average. This behavior is quite intuitive since, while choosing a certain fraction of users to observe, there is a tradeoff between the quality of the reference set and the number of users left for selection.
If the fraction of users observed is low, the reference set formed in observation phase will be of low quality and hence, the selection of winning users based on the reference set by ONLINE-SMART will be poor. If the fraction of users observed is high, the reference set formed in the observation phase will be of high quality, however, there will be very few users left for selection.

%The conclusion to the paper
\section{Conclusion}
In this paper, we proposed truthful algorithms for mobile crowd-sourcing applications for both the offline and online scenarios. Both the algorithms are inspired by the well-known VCG mechanism that is known to be truthful, with  polynomial complexity compared to exponential/combinatorial complexity of the VCG mechanism. The online version of the algorithm follows the solutions to the $k$-secretary problem, where first some users are just observed, and whose profiles are then used to select users among the remaining ones. Even though we have been able to show that the proposed algorithm have all the four useful properties of any auction design, one question that has not been answered is how much penalty our offline algorithm pay with respect to the optimal offline algorithm, and what is the competitive ratio of our online algorithm. Both of these questions are important in deriving efficient mobile crowd-sourcing auction design.

%The bibliography begins here

%Document ends here
\end{document}